\newtheorem{proposition}{Proposition}
\newtheorem{lemma}{Lemma}
\newcommand{\mb}[1]{\mathbf{#1}}
\newcommand{\mc}[1]{\mathcal{#1}}
\newcommand{\mbb}[1]{\mathbb{#1}}
\newlength{\figWidth}
\newcounter{MYtempeqncnt}
\begin{document}
%
\title{Secrecy Rate of Cooperative MIMO in the Presence of a Location Constrained Eavesdropper}
%
%
%

\author{Zhong~Zheng,~\IEEEmembership{Member,~IEEE,}
        ~Zygmunt~J.~Haas,~\IEEEmembership{Fellow,~IEEE,}
        ~and~Mario~Kieburg
\thanks{Z. J. Haas has been funded by ARL Grant number W911NF1810406. Z. Zheng and Z. J. Haas have been funded in part by the NSF grant number ECSS-1533282. M. Kieburg acknowledges support by the German research council (DFG) via the CRC 1283: \lq\lq Taming uncertainty and profiting from randomness and low regularity in analysis, stochastics and their applications\rq\rq.


Z. Zheng was with the Department of Computer Science, University of Texas at Dallas, Richardson, TX 75080 USA (e-mail: zhong.zheng.z@ieee.org). Z. J. Haas is with the Department of Computer Science, University of Texas at Dallas, Richardson, TX 75080 USA, and also with the School of Electrical and Computer Engineering, Cornell University, Ithaca, NY 14853 USA (e-mail: zhaas@cornell.edu). M. Kieburg is with the Faculty of Physics, Bielefeld University, PO-Box 100131, 33501 Bielefeld, Germany (e-mail: mkieburg@physik.uni-bielefeld.de).}}

%
%

\markboth{IEEE Transactions on Communications}%
{Submitted paper}
%



\maketitle
\begin{abstract}
We propose and study the cooperative MIMO architecture to enable and to improve the secrecy transmissions between clusters of mobile devices in the presence of an eavesdropper with certain location constraint. The cooperative MIMO system in this paper (referred to as \emph{Reconfigurable Distributed MIMO}) is formed by temporarily activating clusters of nearby trusted mobile devices, with each cluster being centrally coordinated by its corresponding cluster head. We assume that the transmitters apply a practical eigen-direction precoding scheme to transmit the confidential signal and artificial noise, while the eavesdropper can be located in multiple possible locations in the proximity. We first obtain the expression of the secrecy rate, where the required average mutual information between the transmitters and the eavesdropper is characterized by closed-form approximations. The proposed approximations are especially useful in the secrecy rate maximization, where the original non-convex problem can be solved by the successive convex approximations. Numerical results show that the secrecy rate can be significantly improved by leveraging the location constraint of the eavesdropper, compared to the existing result. We also demonstrate that the secrecy rate can be further improved by increasing the cluster size.
\end{abstract}


\begin{IEEEkeywords}
Physical-layer security, cooperative MIMO, reconfigurable distributed MIMO, mobile networks, artificial noise injection, random matrix theory.
\end{IEEEkeywords}

%
\IEEEpeerreviewmaketitle

\section{Introduction}
%
%
%
%

\IEEEPARstart{P}{hysical-layer} security has recently attracted increasing attention in the field of wireless communications, as it guarantees information security between the legitimate entities even when the eavesdropper overhears the communications~\cite{liang2009}. Unlike the conventional cryptographic methods, the physical-layer security does not rely on key-based encryption for the confidential message. Instead, the legitimate transmitter utilizes the characteristics of the communication channels to encode the information, so that the confidential message is conveyed to the legitimate receiver, while the information leakage to the eavesdropper is eliminated in the information-theoretical sense~\cite{wyner1975}. Therefore, the physical-layer security guarantees perfect information secrecy between legitimate transceivers and circumvents the challenges in the conventional key-based encryption, such as key distribution and management, which are especially critical in wireless communication due to its broadcasting nature. 

Throughout this paper, we refer to the legitimate transmitter(s) and the legitimate receiver(s) as Alice and Bob, respectively, and refer to the eavesdropper as Eve. Under the secrecy constraint, the maximum information rate between Alice and Bob is characterized by the \lq\lq secrecy capacity.\rq\rq\ As shown in~\cite{wyner1975} and~\cite{csiszar1978}, the secrecy capacity is the difference between the Shannon rates of two channels, the legitimate channel between Alice and Bob and the eavesdropping channel between Alice and Eve, under the condition that the mutual information between Alice and Eve is zero. 

Multi-antenna transceivers increase the spatial degrees of freedom of the communication, which can be leveraged to improve the secrecy rate~\cite{hero2003}. When Alice has a single antenna and Bob has multiple antennas, a.k.a. the Single-Input Multi-Output (SIMO) system, the secrecy capacity and the outage secrecy capacity were investigated in~\cite{parada2005} and~\cite{zhu2016}, while the corresponding MISO case was studied in~\cite{li2007} and~\cite{shafiee2007}. The secrecy capacity of a special case of the MIMO wiretap channel was considered in~\cite{shafiee2009}, where Alice and Bob are equipped with dual antennas, and Eve is equipped with a single antenna. Recently, the physical-layer security has also been considered for the Distributed-MIMO (D-MIMO) systems in~\cite{jin2013,zhang2013,ng2015,akino2011,wang2016}, where Alice has a distributed antenna array composed of remote fixed antenna ports or base stations. Therein, the distributed antenna elements are connected to the central base station via tethered high-speed fiber links, which is relevant for infrastructure D-MIMO systems. The D-MIMO configurations further improve the secrecy capacity by leveraging the macro diversity due to more diverse path losses and shadowing~\cite{wang2016}. Unlike the infrastructure D-MIMO systems, the cooperative MIMO architecture studied in this paper is based on forming the distributed antenna arrays on-demand, where each array is composed of a cluster of nearby legitimate mobile devices. The cooperative MIMO improves the secrecy rate by leveraging the benefits of distributed MIMO, while having the flexibility in the topology and the size of the formed antenna arrays.

\subsection{Related Works}

Evaluating the secrecy capacity amounts to finding the optimal transmitted signals of Alice, which maximizes the difference between the mutual information of the legitimate and the eavesdropping channels. In the MIMO wiretap channels, the problem becomes optimizing the input covariance matrix, which is a difficult non-convex problem. Previous works showed that the design of the optimal input heavily depends on the antenna configurations of the MIMO channels and the amount of Channel State Information at Transmitter (CSIT) available at Alice. In particular, when both Alice and Eve are equipped with multiple antennas, Bob has a single antenna, and Alice has CSITs of both channels, authors in~\cite{khisti2010} show that the capacity-achieving transmitter scheme is beamforming. The beamforming direction is chosen as the generalized eigenvector corresponding to the maximum generalized eigenvalue of the wiretap channel. When Bob is also equipped with multiple antennas, the conditions of a full rank and a rank deficient input covariance matrix are proved in~\cite{oggier2011}. In~\cite{khisti2010b} and~\cite{bustin2009}, assuming general multi-antenna wiretap channels, the input covariance matrices are constructed via the generalized eigenvalue decompositions of the main channel and the eavesdropping channel. The closed-form expression of the optimal input is given in~\cite{fakoorian2013} when the input covariance matrix is full rank, and is given in~\cite{loyka2016} when it is either full rank or of rank-one.

In the context of secrecy communications, the eavesdropper is usually passive and silent as to avoid being detected. Therefore, it is more practical to assume only the statistical, rather than the full knowledge, of the CSIT of the eavesdropping channel~\cite{wang2015}. Compared to the full CSIT counterpart, there are much fewer results available for the secrecy capacity under the statistical CSIT assumption. The optimal transmitter can be only characterized for certain channel configurations. When Alice has multiple antennas, and both Bob and Eve have single receive antenna,~\cite{li2011} provides a sufficient condition for the optimal input covariance matrix being rank-one. In the same setting, authors in~\cite{lin2013} derive the optimal input covariance matrix, where the artificial noise is injected and optimized. The optimal rank-one input covariance matrix has been also identified in~\cite{li2011b} when Eve is equipped with multiple receive antennas, and an on-off power allocation scheme is proposed in~\cite{nguyen2011} to maximize the secrecy rate. When the statistical CSITs of both channels are available and Bob has more receive antennas compared to Eve, authors in~\cite{lin2014} show that the optimal input covariance matrix is an identity matrix with uniform power allocation across the transmit antennas. In~\cite{wang2015b}, a random precoding is applied at Alice and is compared with artificial noise injection scheme. However, when the instantaneous CSIT of the legitimate channel and the statistical CSIT of the eavesdropping channel are available, the capacity-achieving transmitter design is still unknown for generic antenna configurations. 

Alternatively, the secrecy rate maximization, assuming statistical CSIT of the eavesdropping channel, has been pursued by approximating the information rate between Alice and Eve, which leads to a simplified optimization problem. In~\cite{vishwakarma2014}, the average mutual information between Alice and Eve is upper-bounded by a deterministic channel and the optimal precoding has the same direction as the generalized eigenvector of the approximate wiretap channels. In~\cite{wu2012}, the average mutual information is lower-bounded with a simplified analytical expression, and the upper bound of secrecy rate is obtained by assuming linear precoding at the transmitter. By approximating the average rate of Eve with its Taylor series expansion, the non-convex secrecy rate maximization reduces to a sequence of convex sub-problems~\cite{zappone2016}. However, the techniques used in~\cite{vishwakarma2014,wu2012,zappone2016} are relevant to the MIMO channels with co-located antenna arrays, and cannot be applied to the D-MIMO (or RD-MIMO) channels, as we assumed in this paper. 

Care must be taken in the secrecy rate optimization for D-MIMO system as the topology of the spatially distributed antennas needs to be incorporated into the optimization problem. Due to this additional technical complication, the existing works on the secrecy D-MIMO systems either assume full CSIT of the eavesdropping channels~\cite{jin2013,zhang2013,ng2015} or adopt simplified transmission methods with statistical CSIT. In~\cite{akino2011}, assuming the linear Time-Reversal (TR) transmissions from distributed antenna systems, the asymptotic secrecy rate is obtained in the large and low SNR regimes, respectively. In~\cite{wang2016}, assuming multiple distributed transmitters with a sub-optimal diagonal precoding, the secrecy rate maximization is asymptotically approximated as a max-min problem for the number of antennas approaching infinity. 

In~\cite{jin2013,zhang2013,ng2015,wang2016,khisti2010,oggier2011,khisti2010b,bustin2009,fakoorian2013,loyka2016,li2011,lin2013,li2011b,nguyen2011,lin2014,wang2015b,vishwakarma2014,wu2012,zappone2016}, the investigation of the secrecy wireless communications assumes a single eavesdropper's location known to Alice. Alice optimizes the secrecy transmissions according to this prior knowledge, and therefore, the resulting secrecy rate is only valid for the presumed eavesdropper's location. To relax this constraint, authors in~\cite{zhu2013,goel2008,zhou2010,he2014} assumed a noiseless eavesdropper with artificial noise injected by the transmitter. This assumption allows the eavesdropper to be anywhere around Alice and the corresponding secrecy rate is valid regardless of the exact location of Eve. However, in some communication scenarios, the \lq\lq anywhere Eve\rq\rq\ is an overly conservative assumption that leads to pessimistic estimation of the secrecy rate. Indeed, in many practical situations, Eve cannot be located in certain areas, thus allowing to implement a larger secrecy rate.

A relevant physical-layer security technique is based on the cooperative communication networks, where a group of trusted nodes are either used as relays or jammers, see e.g.,~\cite{wang2015} and~\cite{wang2015c}. In particular, a cooperative relaying scheme is proposed in~\cite{xu2017} and~\cite{wang2018} to improve the received SNR of the legitimate receiver, where the relays are selected according to the degree of social tie with the legitimate transmitter. Therein, the SNR thresholds serve as the secrecy and reliability criterions at the eavesdropper and at the receiver, respectively, and the locations of the eavesdroppers are modeled as statistical point processes. 

\subsection{Physical-Layer Security of Cooperative MIMO}
In our recent work~\cite{zheng2017}, we studied a mobile cooperative MIMO architecture, which we refer to as the \emph{Reconfigurable Distributed MIMO (RD-MIMO)}. Therein, a group of nearby mobile nodes forms a node cluster to jointly transmit or receive wireless signals. The cluster is coordinated by a head node and the coordination signaling is exchanged within the cluster via local wireless connections. When two clusters communicate with each other, and the head node of each cluster jointly encodes or decodes the communicated symbols, the distributed antenna arrays resemble the D-MIMO system. In this work, we consider the physical-layer security of the communications between such two clusters of legitimate nodes, each collectively called Alice and Bob, respectively. The inter-cluster transmissions are wiretapped by an eavesdropper, which may be placed at a number of possible locations, including those near the legitimate devices. The secrecy rate optimization with regard to the multiple possible locations generalizes the conventional framework that assumes a single eavesdropper's location. This is also useful to relax the \lq\lq anywhere Eve\rq\rq\ assumption in~\cite{zhu2013,goel2008,zhou2010,he2014}, and the considered framework can take into account practical constraints of the system topology to improve the achievable secrecy rate. Namely, if there exists an area around Alice guaranteed to be free of eavesdropper, the secrecy rate can be optimized assuming that Eve is indeed placed outside this eavesdropper-free region. For instance, it could be assumed that Eve could not be located within a secure military camp, allowing the devices within the camp to achieve significantly larger secrecy rate.

The RD-MIMO framework improves the secrecy rate both effectively and flexibly. On one hand, compared to the secured transmissions between two single devices, the spatial degrees of freedom of RD-MIMO are increased, which can be utilized in designing the secrecy transmission. On the other hand, the node cluster can be formed on-demand and the number of nodes in the cluster can be determined according to the performance requirement.

We summarize the contributions of the proposed secrecy cooperative MIMO framework:
\begin{itemize}
\item We assume Eve may appear at a finite number of possible locations. As the eavesdropper may keep silent, Alice only has the statistical CSITs of the eavesdropper's channels corresponding to each of the possible Eve's locations. This system model can be used to evaluate the achievable secrecy rate of the cooperative MIMO, where the location of Eve is loosely constrained without specifying a known location\footnote{In order to fulfill certain location constraint, Eve should be located within a continuous region or on a continuous contour. However, as will be shown in Section~\ref{secOpt}, the region or the contour constraint can be well-approximated by placing Eve on a sufficiently large number of discrete possible locations. We will leave the study of a region-constrained Eve in future work.}. For example, Eve may be separated from Alice with a certain minimum distance.
\item We consider eigen-direction precoding (\cite{shafiee2007}) to construct the input covariance matrix, which reduces the original problem to a lower dimensional power allocation problem. Accordingly, we derive approximations for the average rate between Alice and Eve as a function of the power allocation vector. Numerical results show that the proposed approximations are reasonably accurate and that they provide over-estimates for the exact average rate with a large probability. This is relevant in the context of physical-layer security to fulfill the required secrecy constraint. In addition, compared to the exact expression of the average rate between Alice and Eve, the proposed approximations are explicit functions of the power allocation vector. This is a useful property to further approximate the average rate, a concave function of the power allocation vector, with a linear affine function, which greatly simplifies the secrecy rate maximization.
\item The non-convex secrecy rate maximization, assuming a number of possible locations of Eve, is solved by successive convex approximations~\cite{marks1978}. It converts the original problem into a sequence of convex sub-problems using the affine approximations for the average rate between Alice and Eve. Each sub-problem can be efficiently solved by standard convex optimization tools. Numerical results show that the proposed optimization framework, by incorporating the location constraint of Eve, can eliminate the outage of transmissions and further improve the achievable secrecy rate, as compared to the existing result of the conservative \lq\lq anywhere Eve\rq\rq\ assumption. In addition, the secrecy rate can be also improved by activating cooperative nodes having a more diverse channel conditions due to distributed node placements. 
\end{itemize}

The rest of the paper is organized as follows: Section~\ref{secModel} introduces the signal model of cooperative MIMO wiretap channels and outlines the eigen-direction precoding scheme. In Section~\ref{secAvgRate}, we derive approximations for the average rate between Alice and Eve using the eigen-direction precoding. Numerical results are illustrated to validate the accuracy of the proposed approximations. In Section~\ref{secOpt}, we present the secrecy rate maximization framework and the numerical results. In Section~\ref{secConclude}, we conclude the findings of this paper. 

\section{System Model}\label{secModel}

Consider a pair of legitimate transmit and receive nodes, which are hereafter referred to as the transmit and receive head nodes. Confidential information is transmitted between the head nodes over the wireless channel and prone to be eavesdropped on by a malicious listener, as shown in Fig.~\ref{figSystem}. In the circular area centering at the transmit head node with radius $r_T$, we assume there is a cluster with $K-1$ trusted transmit nodes, and the transmit cluster is collectively referred to as Alice. Similarly, in the circular area centering at the receive head node with radius $r_R$, there are $N-1$ trusted receive nodes, and the receive cluster is collectively referred to as Bob. All the legitimate and trusted nodes are equipped with a single antenna, while the malicious listener, Eve, is equipped with an $M$-antenna array. Note that there are various ways to identify trusted nodes near the head nodes. As an example, the trusted nodes can be identified by the degree of social tie with the cluster head nodes~\cite{xu2017,wang2018}. In this work, we do not elaborate on the node identification process, but we assume the trusted nodes are given. 

\begin{figure}[t!]
\centerline{\includegraphics[width=2.8in,height=1.9in]{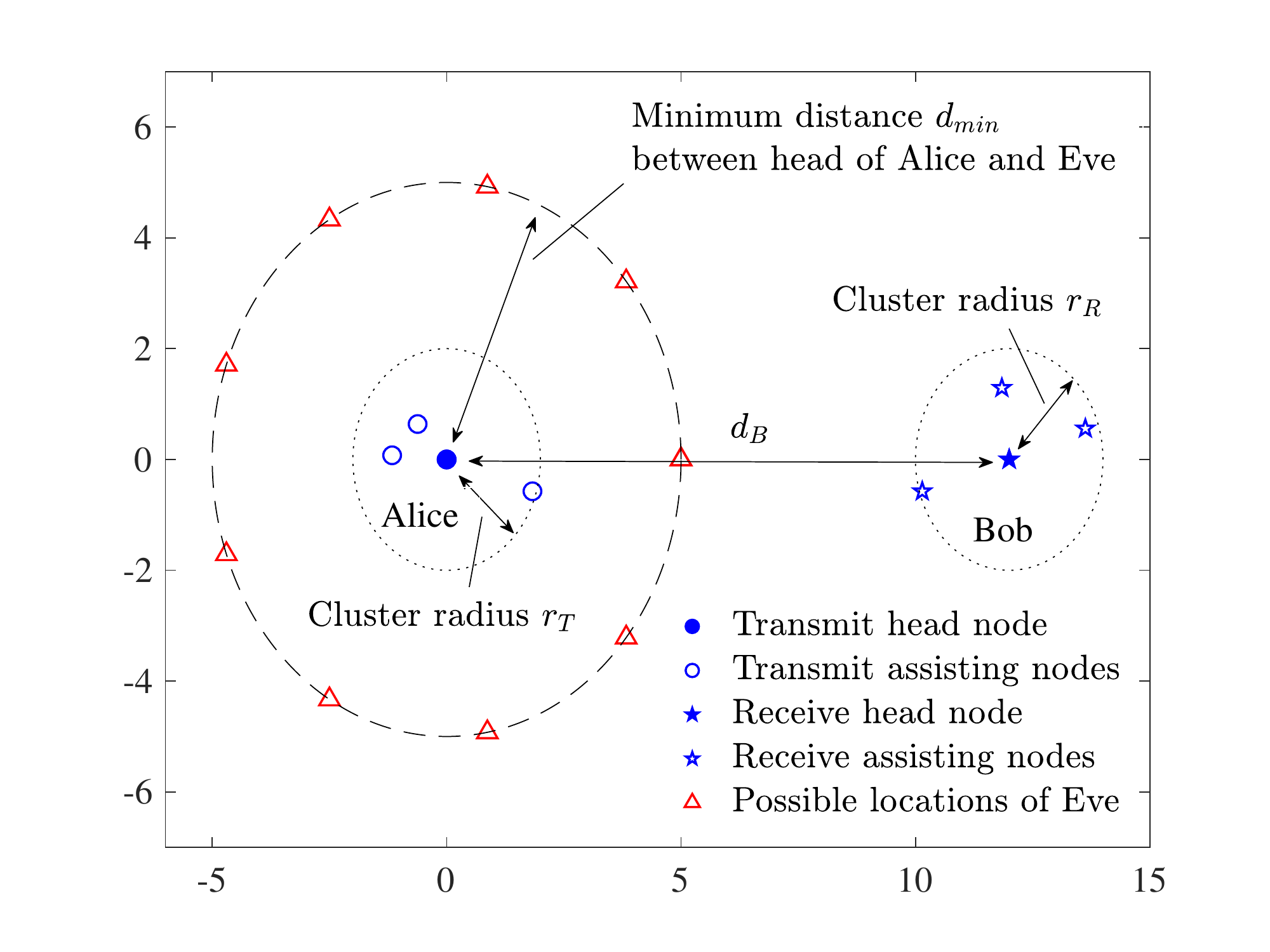}
}
\caption{Cooperative MIMO in presence of an eavesdropper. Possible locations of Eve have a minimum distance $d_{\mathrm{min}}$ towards the head of Alice. The distance between head nodes of Alice and Bob is $d_B$.}
\label{figSystem}
\end{figure}

In typical communication systems, the eavesdropper remains silent to avoid being detected by the legitimate nodes. However, in certain scenarios, it is relevant to assume that the position of the eavesdropper is constrained to possible locations, and that the communication between the legitimate nodes can be optimized according to the prior knowledge of such possible locations. As an example, consider a communication system with a minimum distance $d_{\mathrm{min}}$ guaranteed between the transmit head node and the eavesdropper. To obtain a non-trivial lower bound on the achievable secrecy rate, the transmitting scheme can be optimized by assuming that the eavesdropper is located on a circle centered at the transmit head node with radius $d_{\mathrm{min}}$. When the transmitter implements a code with data rate less or equal to the lower bound of secrecy rate, the communications satisfy the secrecy criterion, when the eavesdropper is anywhere outside the $d_{\mathrm{min}}$-radius circle\footnote{The eavesdropping model can be viewed as a special case of the annulus threat model as in~\cite{yan2015}.}. Fig.~\ref{figSystem} shows the possible locations of the eavesdropper, evenly distributed on a circle. The use of sampled discrete locations reduces the complexity of the secrecy rate optimization. As shown in Section~\ref{secOpt}, the approximate but highly likely lower bound of the secrecy rate can be found by assuming 10 evenly distributed locations on the circle. The minimum distance restriction is practical in many communication scenarios. In addition to the military camp example, other examples include cooperative transmissions from within a residential house or from within a personal vehicle, where multiple trusted communication devices belong to a private person or persons with high degree of social ties~\cite{wang2018}. Without permission to physically enter the camp, the house, or the vehicle, any eavesdropper has to be located outside the perimeter of the premises, and therefore, a certain minimum distance is guaranteed. Note that the following discussions and the optimization framework in Section~\ref{secOpt} can be also extended when the possible locations of Eve are placed in an arbitrary area.

Using the RD-MIMO framework as discussed in~\cite{zheng2017}, the $K$ transmit nodes form a cooperative transmit cluster, where the head node is responsible for encoding the information symbols into transmit signals, distributes the encoded signals to the cluster nodes, and synchronizes the transmissions from the cluster. Similarly, the $N$ receive nodes form the cooperative receive cluster, in which the receive head node collects the received signals from its assisting nodes and jointly decodes the receive symbols. We assume the node cooperation within each cluster is performed over high-speed wireless connections, while the transmissions between clusters have much lower rate due to more severe channel impairments. Therefore, the inter-cluster transmissions are the bottleneck of the system, and the channels between Alice and Bob resemble the D-MIMO channels. We focus on the physical layer security of the inter-cluster communication, where the communication between Alice and Bob in the presence of Eve is modeled by the MIMO wiretap channel~\cite{loyka2016}. We assume the security of the intra-cluster communication within Alice and Bob can be guaranteed relatively easily, because the cluster nodes are in close proximity, so the communication links have shorter range and have higher channel capacity. 

\subsection{Signal Model}\label{secSignal}

Given a vector $\mb{x}=[x_1,\ldots,x_K]^\mathrm{T}$, where $x_k$ denotes the transmit signal of the $k^{th}$ node at Alice, the vector $\mb{y}=[y_1,\ldots,y_N]^\mathrm{T}$ denotes the receive signals at Bob, and $\mb{z}_i=[z_{i,1},\ldots,z_{i,M}]^\mathrm{T}$, $1\le i\le L$, denotes the receive signals at Eve when Eve resides at the $i^{th}$ location: 
\begin{align}
\mb{y} &= \sqrt{g_B}\,\mb{H} \mb{x} + \mb{n}_B,\\
\mb{z}_i &= \sqrt{g_E}\,\mb{F}_i \mb{x} + \mb{n}_E, 
\end{align}
where $L$ is the number of possible locations of Eve, $y_n$ and $z_{i,m}$ are the receive signal of the $n^{th}$ node at Bob ($1\le n\le N$) and of the $m^{th}$ antenna at Eve ($1\le m\le M$), respectively. The MIMO channel between Alice and Bob is denoted as~$\mb{H}$, where the entry $\mb{H}_{n,k}$ is the channel coefficient between the $k^{th}$ transmit node in Alice and the $n^{th}$ receive node in Bob. The channel coefficient $\mb{H}_{n,k}$ is modeled as the complex Gaussian random variable with zero mean and variance $\beta_{n,k}$, where $\beta_{n,k}$ is the normalized average channel gain given by:
\begin{equation}
\beta_{n,k} = \frac{\mathrm{PL}(d_{n,k})}{g_B} = \left(\frac{d_B}{d_{n,k}}\right)^\alpha.
\end{equation}
Here, $\alpha(\ge 2)$ is the path-loss exponent, $\mathrm{PL}(d)=c_p d^{-\alpha}$ is the average channel gain between two nodes at a distance $d$, $c_p$ is the path loss at the unit distance, and $g_B=\mathrm{PL}(d_B)$ denotes the average gain between the head nodes of Alice and Bob.

The channel coefficients between Alice and Eve are denoted as $\mb{F}_i=\mb{W}_i \mb{\Sigma}_i^{1/2}$, where the $M\times K$ matrix $\mb{W}_i$ denotes the fast fading coefficients, modeled by \emph{i.i.d.} standard complex Gaussian random variables. The $K\times K$ diagonal matrix $\mb{\Sigma}_i=\mathrm{diag}([\sigma_{i,1},\ldots,\sigma_{i,K}])$ denotes the average channel gains between Alice and Eve with the $k^{th}$ diagonal entry $\sigma_{i,k}$ being: 
\begin{equation}
\sigma_{i,k} = \frac{\mathrm{PL}(d_{i,k})}{g_E} = \left(\frac{d_E}{d_{i,k}}\right)^\alpha,\label{eqsigma}
\end{equation}
where $d_{i,k}$ is the distance between the $k^{th}$ node of Alice and the $i^{th}$ location of Eve, and $g_E=\mathrm{PL}(d_E)$ denotes the average channel gain between the head node of Alice and Eve. The additive noise $\mb{n}_B$ at Bob and $\mb{n}_E$ at Eve are modeled as \emph{i.i.d.} complex Gaussian vectors with power $N_0$, i.e., $\mb{n}_B\sim \mc{CN}(\mb{0},N_0 \mb{I}_N)$ and $\mb{n}_E\sim \mc{CN}(\mb{0},N_0 \mb{I}_M)$.

In this work, we adopt the following signal-level assumptions:
\begin{enumerate}
\item The channels $\mb{H}$ and $\mb{F}_i$ are frequency flat and follow the block fading process, i.e., the entries of $\mb{H}$ and $\mb{F}_i$ vary independently from one channel coherent time to another, but remain constant for each coherent time.
\item The instantaneous CSI of $\mb{H}$ is known by the nodes in Alice, Bob, and Eve. The instantaneous CSI $\mb{F}_i$ is only known by Eve, while the statistical CSIs of $\mb{F}_i$, $i=1,\ldots,L$, are available to Alice.
\end{enumerate}

The statistical CSI of $\mb{F}_i$ relies on the distance-dependent path loss and the number of available antennas at Eve, which is acquired from prior knowledge. For example, there may exist a maximum number of antenna $M_\mathrm{max}$ that Eve can be equipped with and a minimum distance $d_{\mathrm{min}}$ between the head node of Alice and Eve. By setting $M=M_\mathrm{max}$ and $d_E=d_\mathrm{min}$, a lower bound of the secrecy rate can be determined. The achievable secrecy rate and a practical precoding scheme are discussed in the next subsection.

\subsection{Secrecy Rate and Precoder Design}

The considered wiretap channel model with multiple possible locations of Eve is equivalent to the compound wiretap channel with the receiver CSI~\cite{bloch2013}. Denote the channel input as $\mc{X}$, the channel output at the legitimate receiver as $\mc{Y}$, and the output at the $i^{th}$ eavesdropper as $\mc{Z}_i$, $1\le i\le L$. According to~\cite[Prop. 6]{bloch2013}, the following secrecy rate $\widehat{R}_{\mathrm{sec}}$ is achievable:
\begin{equation}
\widehat{R}_{\mathrm{sec}} = \max_{p(\mc{V,\mc{X}})} [I(\mc{V};\mc{Y}) - \max_{1\le i\le L} I(\mc{V};\mc{Z}_i)]^+,\label{eqRh_sec}
\end{equation}
where $[x]^+ = \max(0,x)$, $I(a;b)$ denotes the mutual information between the random variables $a$ and $b$, $\mc{V}$ is an auxiliary random variable, and $\mc{V}\rightarrow \mc{X}\rightarrow (\mc{Y},\mc{Z}_i)$, $1\le i\le L$, form Markov chains. The outer maximization in (\ref{eqRh_sec}) is optimized over the joint distribution $p(\mc{V},\mc{X})$ of the random variables $\mc{V}$ and $\mc{X}$. Note that the secrecy rate (\ref{eqRh_sec}) is derived in~\cite{bloch2013} assuming various secrecy criterions, where each corresponding secrecy outage probability decays exponentially to zero as the length of codeword increases to infinity. Since the CSIs are available to the receivers, the outputs of the compound channel at the eavesdropper can be viewed as the tuples $\mc{Z}_i=\{\mb{z}_i,\mb{F}_i\}$. By using the chain rule of mutual information, we obtain: 
\begin{align}
I(\mc{V};\mc{Z}_i) &= I(\mc{V};\mb{z}_i,\mb{F}_i) \nonumber\\
&= I(\mc{V};\mb{z}_i|\mb{F}_i) + I(\mc{V};\mb{F}_i) = I(\mc{V};\mb{z}_i|\mb{F}_i),
\end{align}
where $I(\mc{V};\mb{F}_i)=0$ since the instantaneous channel is unknown to the transmitter and therefore $\mc{V}$ and $\mb{F}_i$ are independent. Similarly, $I(\mc{V};\mc{Y})=I(\mc{V};\mb{y},\mb{H})$ by treating $\mc{Y}=\{\mb{y},\mb{H}\}$.	

The rate maximizing distribution $p(\mc{V},\mc{X})$ in (\ref{eqRh_sec}) is still an open problem for generic MIMO wiretap channel. To proceed, we set $\mc{V}=\mb{x}$ and assume Gaussian signaling at the transmitter, i.e., $\mb{x}$ is a multivariate Gaussian vector. In order to obscure the information reception at Eve, artificial noise is sent with the information signal, and the transmitted symbol $\mb{x}$ is the sum of the precoded information and the artificial noise, i.e.,
\begin{equation}\label{eqx}
\mb{x} = \sqrt{P}\,\mb{V}_s\mb{\Psi}_s^{1/2}\mb{s} + \sqrt{P}\,\mb{V}_a\mb{\Psi}_a^{1/2}\mb{a},
\end{equation}
where $P$ denotes the total maximum transmit power, $\mb{s}\sim\mc{CN}(\mb{0},\mb{I}_K)$ and $\mb{a}\sim\mc{CN}(\mb{0},\mb{I}_K)$ are the information signal and the artificial noise, respectively. The non-negative diagonal matrix $\mb{\Psi}_t=\mathrm{diag}([\psi_{t,1},\ldots,\psi_{t,K}])$, $t\in\{s,a\}$, denotes the power allocation of the information $\mb{s}$ or the artificial noise $\mb{a}$, where the $k^{th}$ diagonal element $0\le \psi_{t,k}\le 1$ is the fraction of power allocated to the $k^{th}$ transmit symbol, and we have the power constraint: 
\begin{equation}
\mathrm{tr}(\mb{\Psi}_s + \mb{\Psi}_a)\le 1.\label{eqPowerCons}
\end{equation}
The precoding matrices $\mb{V}_s$ and $\mb{V}_a$ map the symbols to the transmit antennas. In particular, we consider the eigen-direction precoding~\cite{shafiee2007}, where $\mb{V}_s=\mb{V}_a=\mb{V}_1$ and the columns of $\mb{V}_1$ are the right singular vectors of the main channel $\mb{H}$, i.e., $\mb{H}$ has the singular value decomposition $\mb{H}=\mb{V}_0 \mb{\Lambda}^{1/2} \mb{V}_1^\dagger$. Although the precoding structure (\ref{eqx}) is heuristic and is generally not optimal due to the choice of $\mb{V}_s=\mb{V}_a=\mb{V}_1$, it is a reasonable scheme and leads to low-complexity transceiver design. First, we note that under the same channel knowledge assumption, the eigen-direction precoding achieves global optimal when Bob and Eve are equipped with single antenna~\cite{li2011}, and has been also adopted in other multi-antenna communication systems such as in~\cite{zhu2013} and~\cite{goel2008}. Second, by multiplying the receive signal $\mb{y}$ with the unitary matrix $\mb{V}_0^\dagger$, the received signals are decomposed into orthogonal parallel data streams with amplitudes proportional to $\lambda_k$, $1\le k\le \min(K,N)$, and therefore, simplifying the receiver design.

Under the precoding scheme (\ref{eqx}), $I(\mb{x};\mb{y},\mb{H})$ and $I(\mb{x};\mb{z}_i|\mb{F}_i)$ are given by the well-known mutual information of the Gaussian MIMO channels~\cite{telatar1999} as:
\begin{align}
&I(\mb{x};\mb{y},\mb{H}) = \log\frac{\det(\mb{I}+\gamma_B\mb{\Lambda}\mb{\Psi}_s + \gamma_B \mb{\Lambda}\mb{\Psi}_a)}{\det(\mb{I}+\gamma_B\mb{\Lambda}\mb{\Psi}_a)},\label{eqIxy}\\
&I(\mb{x};\mb{z}_i|\mb{F}_i) = \nonumber\\
&\hspace{-1ex} \mbb{E}\left[\log\frac{\det(\mb{I} + \gamma_E\mb{W}_i\mb{\Sigma}_i^{\frac{1}{2}}\mb{V}_1(\mb{\Psi}_s + \mb{\Psi}_a)\mb{V}_1^\dagger\mb{\Sigma}_i^{\frac{1}{2}}\mb{W}_i^\dagger)}{\det(\mb{I} + \gamma_E\mb{W}_i\mb{\Sigma}_i^{\frac{1}{2}}\mb{V}_1\mb{\Psi}_a\mb{V}_1^\dagger\mb{\Sigma}_i^{\frac{1}{2}}\mb{W}_i^\dagger)}\right],\label{eqIxz}
\end{align}
where $\gamma_B = P g_B/N_0$ and $\gamma_E = P g_E/N_0$. The expectation in (\ref{eqIxz}) is due to the definition of the conditional mutual information, and is taken over the distribution of $\mb{F}_i$. For the ease of exposition, denote the vectors $\bm{\gamma}_s=[\gamma_{s,1},\ldots,\gamma_{s,K}]$, $\bm{\gamma}_a=[\gamma_{a,1},\ldots,\gamma_{a,K}]$, where $\gamma_{s,k}=\gamma_E \psi_{s,k}$ and $\gamma_{a,k}=\gamma_E \psi_{a,k}$. Then, with $I(\mb{x};\mb{y},\mb{H})$ and $I(\mb{x};\mb{z}_i|\mb{F}_i)$ given in (\ref{eqIxy}) and (\ref{eqIxz}), a lower bound of the secrecy rate (\ref{eqRh_sec}), is expressed in terms of $\{\bm{\gamma}_s,\bm{\gamma}_a\}$ as:
\begin{equation}
\label{eqR_sec}
\begin{aligned}
R_{\mathrm{sec}}\!=\!\max_{\bm{\gamma}_s,\bm{\gamma}_a\in\mc{P}}\!\left[R_s(\bm{\gamma}_s,\bm{\gamma}_a)\right]^+\!,\ \mathrm{s.t.}\ \sum_{i=1}^K(\gamma_{s,i} + \gamma_{a,i})\le \gamma_E,
\end{aligned}
\end{equation}
where $\mc{P}$ denotes the set of $K$-dimensional non-negative real vectors. The constraint of (\ref{eqR_sec}) is rewritten from (\ref{eqPowerCons}) and $R_s(\bm{\gamma}_s,\bm{\gamma}_a)$ is given by:
\begin{align}
&R_s(\bm{\gamma}_s,\bm{\gamma}_a) = f_B(\bm{\gamma}_s + \bm{\gamma}_a) - f_B(\bm{\gamma}_a)\nonumber\\
&\qquad-\max_{1\le i\le L}\left\{f_{E,i}(\bm{\gamma}_s + \bm{\gamma}_a;\mb{V}_1) - f_{E,i}(\bm{\gamma}_a;\mb{V}_1)\right\},\label{eqRs}
\end{align}
where $\bm{\gamma}_s+\bm{\gamma}_a$ is defined as the element-wise addition, functions $f_B(\bm{\gamma})$ and $f_{E,i} (\bm{\gamma};\mb{V}_1)$ are:
\begin{equation}
f_B(\bm{\gamma}) = \log\det\left(\mb{I} + \frac{\gamma_B}{\gamma_E}\mb{\Lambda}\mb{\Gamma}\right),\label{eqfb}
\end{equation}
\begin{equation}
f_{E,i}(\bm{\gamma};\mb{V}_1) = \mbb{E}\left[\log\det\left(\mb{I} + \mb{W}_i\mb{\Sigma}_i^{\frac{1}{2}}\mb{V}_1\mb{\Gamma}\mb{V}_1^\dagger\mb{\Sigma}_i^{\frac{1}{2}}\mb{W}_i^{\dagger}\right)\right],\label{eqfe}
\end{equation}
and $\mb{\Gamma}=\mathrm{diag}(\bm{\gamma})$. Note that $f_{E,i} (\bm{\gamma};\mb{V}_1)$ depends on the channel $\mb{H}$ via the precoding matrix $\mb{V}_1$, which is fixed under the expectation operation. In Section~\ref{secAvgRate}, we will derive approximations for $f_{E,i}(\cdot)$ under different channel settings, and in Section~\ref{secOpt} present an optimization framework to obtain the optimal power allocation as to maximize $R_{\mathrm{sec}}$ in (\ref{eqR_sec}).

\section{Average Rate of Eigen-Direction Precoding}\label{secAvgRate}
In this section, we present analytical approximations for the function $f_{E,i}(\bm{\gamma};\mb{V}_1)$, defined in (\ref{eqfe}). Compared to the exact expression of $f_{E,i}(\bm{\gamma};\mb{V}_1)$, the proposed approximations are explicit functions of the variable $\bm{\gamma}$, which can be conveniently used to deduce the corresponding linear affine approximations of $f_{E,i}(\bm{\gamma};\mb{V}_1)$. As will be shown in Section~\ref{secOpt}, the affine approximation of $f_{E,i}(\bm{\gamma};\mb{V}_1)$ is the key ingredient to convert the non-convex problem (\ref{eqR_sec}) into a sequence of simple convex sub-problems. Under some typical system settings of the cooperative MIMO, we also present examples of numerical results to illustrate the approximation error incurred by using the proposed approximations. For notational simplicity, whenever it is clear from the context, we drop the dependency of the location index $i$ from $f_{E,i}(\bm{\gamma};\mb{V}_1)$ and its approximations.

\subsection{Average Rate Between Alice and Eve}

Given a certain power allocation $\bm{\gamma}$ and a precoding $\mb{V}_1$, $f_E(\bm{\gamma};\mb{V}_1)$ in (\ref{eqfe}) has the same formulation as the correlated Rayleigh MIMO channel with the transmitter-side correlation, which is available in literature~\cite{simon2006}. However, by using the expression~\cite[Eq. (123)]{simon2006}, $f_E(\bm{\gamma};\mb{V}_1)$ depends on the power allocation $\bm{\gamma}$ via the eigenvalues of $\mb{\Sigma}^{1/2}\mb{V}_1\mb{\Gamma}\mb{V}_1^\dagger\mb{\Sigma}^{1/2}$, instead of $\bm{\gamma}$ itself. As a result, it is inconvenient to use $f_E(\bm{\gamma};\mb{V}_1)$ in the power optimization, as it incurs large computation complexity required to solve the eigenvalue problems. In addition, the implicit solution of the function $f_E(\bm{\gamma};\mb{V}_1)$ would prevent further manipulations to simplify the secrecy rate optimization procedures, such as those described in Section~\ref{secOpt}.

To address this issue, we propose approximations for $f_E(\bm{\gamma};\mb{V}_1)$, which will directly depend on $\bm{\gamma}$. Specifically, consider the quantity $\widetilde{f_E}(\bm{\gamma})$ constructed as follows:
\begin{equation}
\widetilde{f_E}(\bm{\gamma}) = \log\mbb{E}[\det(\mb{I} + \mb{W}\mb{\Sigma}^{1/2}\mb{U}\mb{\Gamma}\mb{U}^\dagger\mb{\Sigma}^{1/2}\mb{W}^\dagger)],\label{eqfte}
\end{equation}
where the expectation is taken over both $\mb{W}$ and the Haar unitary matrix $\mb{U}\in\mc{U}_K$. We denote $\mc{U}_K$ as the unitary group containing all $K\times K$ unitary matrices~\cite{sterngerg1995}. Comparing (\ref{eqfte}) with (\ref{eqfe}), we replace the fixed unitary matrix $\mb{V}_1$ with the random unitary matrix $\mb{U}$ and apply the Jensen's inequality for the concave $\log(\cdot)$ function. Intuitively, when the path-loss matrix $\mb{\Sigma}$ is close to an identity matrix, the unitary matrix $\mb{V}_1$ or $\mb{U}$ commutes with $\mb{\Sigma}$ and can be absorbed into the unitary invariant matrix $\mb{W}$. The quantity $\widetilde{f_E}(\bm{\gamma})$ then becomes the strict upper bound of $f_E(\bm{\gamma};\mb{V}_1)$. In addition, under other typical settings of the cooperative MIMO systems, we will also show that $\widetilde{f_E}(\bm{\gamma})$ can, in fact, serve as an accurate approximation for $f_E(\bm{\gamma};\mb{V}_1)$.

In the following, we will denote $\{\mc{D}_{i,j}\}_{1\le i\le a,1\le j\le b}$ as a $a\times b$ matrix block. We denote $\Delta_m(\bm{a})=\det[a_i^{j-1}] = \prod_{1\le j\le k\le m}(a_k-a_j)$ as the Vandermonde determinant and denote ${}_p F_q\left(\begin{array}{c} a_1,\ldots,a_p \\ b_1,\ldots,b_q \end{array}\middle | x\right)$ as the generalized hypergeometric function with $p+q$ parameters~\cite[Eq. (9.14.1)]{gradshteyn2014}. When $|x|<1$, it admits the series representation:
\begin{equation}\label{eqHyper}
{}_p F_q\left(\begin{array}{c} a_1,\ldots,a_p \\ b_1,\ldots,b_q \end{array}\middle | x\right) = \sum_{n=0}^\infty \frac{(a_1)_n\ldots (a_p)_n}{(b_1)_n\ldots (b_q)_n}\frac{x^n}{n!},
\end{equation}
where $(a)_n=\Gamma(a+n)/\Gamma(a)$ is the Pochhammer symbol. In the following propositions, without loss of generality, we present the expressions of $\widetilde{f_E}(\bm{\gamma})$ when $\gamma_1\ge\cdots\ge\gamma_K$. Indeed, from (\ref{eqfte}), it is easy to check $\widetilde{f_E}(\bm{\gamma})$ is invariant under any permutation of the elements of $\bm{\gamma}$, since $\mb{U}\mb{\Gamma}\mb{U}^\dagger$ has the same distribution as $\mb{U}\mb{P}\mb{\Gamma}\mb{P}^\dagger\mb{U}^\dagger$ for any permutation matrix $\mb{P}$. We recall $\{\sigma_k\}_{1\le k\le K}$ are the average channel gains between Alice and Eve as defined in (\ref{eqsigma}).

\begin{proposition}\label{prop1}
If $M\ge K\ge n$, $\gamma_1\ge\cdots\ge\gamma_n>0$ and $\gamma_{n+1}=\cdots=\gamma_K=0$, then
\begin{align}
\widetilde{f_E}(\bm{\gamma}) = &\sum_{j=K-n}^{K-1} \log\frac{\Gamma(K+1-j)\Gamma(j+1)\Gamma(M-K+1)}{\Gamma(K+1)\Gamma(M-K+j+1)}\nonumber\\
&\qquad+\log\left(\frac{\det[\mc{A}]}{\Delta_K(\bm{\sigma})\Delta_n(\bm{\gamma})\prod_{i=1}^n\gamma_i^{K-n}}\right),\label{eqfe_tilde1}
\end{align}
where the $K\times K$ matrix $\mc{A}$ is given by
\begin{equation}\label{eqA}
\hspace{-1ex}\mc{A}\!=\!\left[\!\begin{array}{c}
\{\sigma_j^{i-1}\}_{1\le i\le K-n, 1\le j\le K} \\
 \left\{{}_2 F_0\left(\begin{array}{c} M-K+1,-K\\ - \end{array}\middle|-\sigma_j\gamma_i\right)\right\}_{\substack{1\le i\le n \\ 1\le j\le K}}
\end{array}\right].
\end{equation}
\end{proposition}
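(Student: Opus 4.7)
The plan is to prove the proposition in two stages: first, reduce the definition of $\wt{f_E}(\bm{\gamma})$ to an explicit sum over elementary symmetric polynomials by sequentially averaging over $\mb{W}$ and $\mb{U}$; second, verify that this sum agrees with the determinantal form in (\ref{eqfe_tilde1}) by a multilinear expansion of $\det[\mc{A}]$.

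I would begin with the Sylvester identity $\det(\mb{I}_M + \mb{A}\mb{A}^\dagger) = \det(\mb{I}_K + \mb{A}^\dagger\mb{A})$ applied with $\mb{A} = \mb{W}\mb{\Sigma}^{1/2}\mb{U}\mb{\Gamma}^{1/2}$, combined with the cyclic invariance $\det(\mb{I}+\mb{A}\mb{B}) = \det(\mb{I}+\mb{B}\mb{A})$, to rewrite
\[
\wt{f_E}(\bm{\gamma}) = \log\mbb{E}\left[\det\left(\mb{I}_K + \mb{\Gamma}\mb{U}^\dagger\mb{Q}\mb{U}\right)\right],
\]
where $\mb{Q} = \mb{\Sigma}^{1/2}\mb{W}^\dagger\mb{W}\mb{\Sigma}^{1/2}$ is a $K\times K$ correlated complex Wishart matrix. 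Expanding $\det(\mb{I}+\mb{M}) = \sum_{k=0}^K e_k(\mb{M})$ with $e_k$ the $k$-th elementary symmetric polynomial of the eigenvalues, I would then integrate $\mb{U}$ out using Schur's lemma applied to the irreducible $k$-th exterior power representation $\Lambda^k$ of $U(K)$, which has dimension $\binom{K}{k}$. This yields $\mbb{E}_{\mb{U}}[e_k(\mb{\Gamma}\mb{U}^\dagger\mb{Q}\mb{U})] = e_k(\mb{\Gamma})e_k(\mb{Q})/\binom{K}{k}$; since $\gamma_{n+1}=\cdots=\gamma_K=0$, only the terms $k \leq n$ survive. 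The outer expectation over $\mb{W}$ follows from Cauchy--Binet combined with the diagonality of $\mb{\Sigma}$: the $k$-th principal minor of $\mb{\Sigma}\mb{W}^\dagger\mb{W}$ indexed by $S\subset [K]$ factors as $(\prod_{i\in S}\sigma_i)\det((\mb{W}^\dagger\mb{W})_{S,S})$, whose expectation is $e_k(\bm{\sigma})(M)_k$ with $(M)_k=\Gamma(M+1)/\Gamma(M-k+1)$. Consequently, I obtain the compact intermediate form
\[
\wt{f_E}(\bm{\gamma}) = \log\sum_{k=0}^{n} \frac{e_k(\bm{\gamma})\,e_k(\bm{\sigma})\,(M)_k}{\binom{K}{k}}.
\]

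To match this with (\ref{eqfe_tilde1}), I would expand the ${}_2F_0$ entries via their terminating series ${}_2F_0(M-K+1,-K;-\sigma_j\gamma_i)=\sum_{l=0}^K \beta_l(\sigma_j\gamma_i)^l$ with $\beta_l = \binom{K}{l}(M-K+1)_l$, row-reduce the bottom block of $\mc{A}$ against the polynomial top block to eliminate the powers $\sigma_j^l$ with $l<K-n$, and apply multilinearity in the bottom $n$ rows. The only surviving contributions come from $n$-tuples of exponents forming an $n$-subset of $\{K-n,\ldots,K\}$, each such tuple indexed by its single omitted value $m$. Applying the Jacobi identity $\det[x_j^{\lambda_i+K-i}] = s_\lambda(\mb{x})\Delta_K(\mb{x})$ with the hook-shape partition $\lambda=(1^{K-m})$ factorizes the two resulting Vandermonde-like determinants into $\Delta_n(\bm{\gamma})(\prod_i\gamma_i^{K-n})\,e_{K-m}(\bm{\gamma})$ and $\Delta_K(\bm{\sigma})\,e_{K-m}(\bm{\sigma})$. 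Writing the proposition's gamma-ratio factor as $\prod_{j=K-n}^{K-1} P_j$, the key observation $\beta_l=1/P_l$ makes the combined product $\prod_{j=K-n}^{K-1}P_j\cdot\prod_{l=K-n}^K\beta_l$ telescope to $\beta_K=(M)_K$; together with the Pochhammer identity $(M)_K = (M)_k(M-K+1)_{K-k}$ applied after substituting $k=K-m$, this reproduces exactly the coefficient $(M)_k/\binom{K}{k}$ in each term of the intermediate formula, closing the proof.

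The main obstacle lies in the bookkeeping of the second stage, particularly the sign analysis of the multilinear expansion, the row-sorting needed to invoke the Jacobi identity, and verifying that the apparently $k$-dependent product of beta- and gamma-ratios reduces to a single $k$-independent constant $(M)_K$. The Haar integration in the first stage, while stated via Schur's lemma on exterior powers, can alternatively be established directly by the Weingarten calculus on $U(K)$ if one prefers to avoid invoking representation theory explicitly.
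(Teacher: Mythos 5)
Your proposal is correct, but it follows a genuinely different route from the paper's. I checked the two delicate points — the Schur-lemma average over $\mb{U}$ and the sign/telescoping bookkeeping in the multilinear expansion of $\det[\mc{A}]$ — and both work out exactly as you claim. The paper instead computes the more general object $\phi(s)=\mbb{E}[\det(\cdot)^s]$ by explicit matrix integration: eigenvalue decomposition of $\mb{X}^\dagger\mb{X}$ with its Jacobian, a Gross--Richards identity for the inner unitary integral, the Harish-Chandra--Itzykson--Zuber formula for the outer one, l'H\^opital-type limits (its Lemma~\ref{lemma1}) to handle the vanishing $\gamma_k$'s, and finally a Selberg integral plus the generalized Andr\'eief identity (its Lemma~\ref{lemma2}) to produce the ${}_2F_0$ determinant, setting $s=1$ only at the very end. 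You exploit from the outset that only $s=1$ is needed, so that $\det(\mb{I}+\cdot)=\sum_k e_k(\cdot)$ is a finite sum: Schur's lemma on the irreducible exterior-power representations gives $\mbb{E}_{\mb{U}}[e_k(\mb{\Gamma}\mb{U}^\dagger\mb{Q}\mb{U})]=e_k(\bm{\gamma})e_k(\mb{Q})/\binom{K}{k}$, Cauchy--Binet with expected Wishart minors gives $\mbb{E}[e_k(\mb{Q})]=e_k(\bm{\sigma})\,\Gamma(M+1)/\Gamma(M-k+1)$, and the match with (\ref{eqfe_tilde1}) is pure multilinear algebra via the bialternant identity for hook shapes, with the coefficient identity $\beta_K/\beta_{K-k}=\Gamma(M+1)/[\binom{K}{k}\Gamma(M-k+1)]$ closing the loop. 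Your route avoids all of the integral machinery, and your intermediate formula $\widetilde{f_E}(\bm{\gamma})=\log\sum_{k=0}^{n}e_k(\bm{\gamma})e_k(\bm{\sigma})\Gamma(M+1)/\bigl[\binom{K}{k}\Gamma(M-k+1)\bigr]$ is a genuine bonus: it is an equivalent and simpler closed form that remains valid when $K>M$ as well (terms with $k>M$ vanish since the corresponding Wishart minors are a.s.\ singular), so it would also serve Proposition~\ref{prop2} and could even streamline the gradient computations in Section~\ref{secOpt}. What the paper's heavier approach buys is the general-$s$ moment $\phi(s)$ and a single determinantal scaffolding reused verbatim across the three regimes of Proposition~\ref{prop2}; with your method, each of the determinant forms $\mc{B}$ and $\mc{C}$ would require its own (routine but nontrivial) combinatorial matching step.
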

\begin{proof}
The proof of Proposition~\ref{prop1} is in Appendix~\ref{appx1}.
\end{proof}

\begin{proposition}\label{prop2}
If $K>M\ge n$, $\gamma_1\ge\cdots\ge\gamma_n>0$ and $\gamma_{n+1}=\cdots=\gamma_K=0$, then
\begin{align}
\widetilde{f_E}(\bm{\gamma}) = & \log\frac{\det[\mc{B}]\prod_{i=1}^n\gamma_i^{n-M}}{\Delta_n(\bm{\gamma})\Delta_K(\bm{\sigma})} - n\log\Gamma(K-M+1)\nonumber\\
&+\sum_{j=K-n}^{K-1}\log\frac{\Gamma(K+1-j)\Gamma(j+1)}{\Gamma(M+1)\Gamma(M-K+j+1)},\label{eqfe_tilde2}
\end{align}
where the $K\times K$ matrix $\mc{B}$ is given by
\begin{equation}\label{eqB}
\hspace{-1.5ex}\mc{B}\!=\!\left[\!\begin{array}{c}
\!\{\sigma_j^{i-1}\}_{1\le i\le K-n, 1\le j\le K} \\
\!\left\{\sigma_{j}^{K-M}{}_3 F_1\left(\begin{array}{c} 1,1,-M\\ K-M+1 \end{array}\middle|-\sigma_j\gamma_i\right)\right\}_{\substack{1\le i\le n \\ 1\le j\le K}}\!
\end{array}\!\right]\!.
\end{equation}
If $K>n>M$, then
\begin{align}
&\widetilde{f_E}(\bm{\gamma}) = \log\frac{(-1)^{(K-M)(n-M)}\det[\mc{C}]}{\Delta_n(\bm{\gamma})\Delta_K(\bm{\sigma})} - M\log\Gamma(M+1)\nonumber\\
&\qquad+\sum_{j=n-M}^{n-1}\log\frac{\Gamma(n+1-j)\Gamma(K-n+j+1)}{\Gamma(K-M+1)\Gamma(M-n+j+1)},\label{eqfe_tilde3}
\end{align}
where the $(K+n-M)\times (K+n-M)$ matrix $\mc{C}$ is given by (\ref{eqC}) on top of the next page.
\end{proposition}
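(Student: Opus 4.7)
My approach would follow the same two-step integration strategy used for Proposition~\ref{prop1}, with modifications to handle the rank-deficient regime $K > M$. Starting from the definition of $\widetilde{f_E}(\bm{\gamma})$ in (\ref{eqfte}), the plan is to evaluate the joint expectation over $\mb{W}$ and $\mb{U}$ by iterated integration: first integrate out the Gaussian $\mb{W}$ with $\mb{U}$ held fixed, then integrate out the Haar unitary $\mb{U}$. Sylvester's identity $\det(\mb{I}_M + \mb{A}\mb{B})=\det(\mb{I}+\mb{B}\mb{A})$ can be used to bring the determinant into the most convenient dimension at each stage.

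For the $\mb{W}$-integration, since $K>M$ the matrix $\mb{W}^\dagger\mb{W}$ is rank-deficient, so I would work with the full-rank $M\times M$ complex Wishart matrix $\mb{W}\mb{W}^\dagger$ of $K$ degrees of freedom. Its joint eigenvalue density supplies Vandermonde factors together with the Gamma-function normalizations $\Gamma(K-M+\cdot)$ and $\Gamma(M-K+\cdot)$ appearing in (\ref{eqfe_tilde2})--(\ref{eqfe_tilde3}). The subsequent $\mb{U}$-integration can be carried out by a Harish-Chandra--Itzykson--Zuber-type formula, exploiting that both $\mb{\Sigma}$ and $\mb{\Gamma}$ are diagonal; this yields a determinantal expression whose entries are Laplace-type integrals that reduce to the ${}_3F_1$ entries of $\mc{B}$ in the case $K>M\ge n$. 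The transition from ${}_2F_0$ (Proposition~\ref{prop1}) to ${}_3F_1$ is traceable to the extra $(K-M)$-dependent power of $\det(\mb{W}\mb{W}^\dagger)$ in the Wishart density weight, which shifts the parameters of the resulting hypergeometric series.

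For the second case $K>n>M$, the effective rank of $\mb{W}\mb{T}\mb{W}^\dagger$ is now limited by $M$ rather than by $n$, so the elementary symmetric polynomials in the eigenvalues of $\mb{\Sigma}^{1/2}\mb{U}\mb{\Gamma}\mb{U}^\dagger\mb{\Sigma}^{1/2}$ that arise after $\mb{W}$-integration are truncated at degree $M$ instead of $n$. This asymmetry between the $n$-dimensional $\mb{\Gamma}$ support and the $M$-dimensional Wishart cap can be handled via an exchange-of-rank representation (or equivalently a dual integral / Cauchy--Binet expansion) in which the role of $\mb{W}\mb{W}^\dagger$ is swapped with that of the effective $\mb{\Gamma}$-block; this enlarges the underlying determinantal structure and is exactly what yields the $(K+n-M)\times(K+n-M)$ matrix $\mc{C}$. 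The rank-deficient case $n<K$ is then handled by $\epsilon$-regularization: set the vanishing $\gamma_i$ to distinct small values $\epsilon_i$, apply the full-rank formula, and pass $\epsilon_i\to 0$; the $\Delta_n(\bm{\gamma})^{-1}$ singularity is resolved by L'H\^opital's rule and extracts Taylor coefficients $\sigma_j^{i-1}$ that populate the first $K-n$ rows of $\mc{B}$, with an analogous mechanism producing the polynomial block inside $\mc{C}$.

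The main obstacle, I expect, will be assembling these intermediate pieces into the clean block-determinantal form of $\mc{C}$. Tracking the sign factor $(-1)^{(K-M)(n-M)}$ and maintaining the correct row/column ordering after repeated Sylvester flips, Cauchy--Binet expansions, and rank-deficient limits requires careful bookkeeping. I do not expect any single deep new identity to be needed, but rather a sequence of elementary manipulations whose combination must be organized with care to reveal the stated block structure.
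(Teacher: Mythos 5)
Your plan reproduces the overall architecture of the paper's proof: pass to the moment $\phi(s)=\mbb{E}\bigl[\det\bigl(\mb{I}+\mb{\Sigma}^{1/2}\mb{W}^\dagger\mb{W}\mb{\Sigma}^{1/2}\mb{U}\mb{\Gamma}\mb{U}^\dagger\bigr)^s\bigr]$, move to eigenvalue/eigenvector coordinates of the Wishart-type matrix, evaluate the unitary integrals by determinantal formulas (the paper uses the Gross--Richards identity for the $\mb{U}$-integral and the Harish-Chandra--Itzykson--Zuber formula for the eigenvector integral), resolve rank deficiencies by l'H\^opital-type limits (Lemma~\ref{lemma1}), and glue the two resulting determinants by a generalized Andr\'eief integration (Lemma~\ref{lemma2}). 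Your structural reading of the result is also correct: the size $K+n-M$ of $\mc{C}$ and the sign $(-1)^{(K-M)(n-M)}$ do come from exactly such a gluing, since Lemma~\ref{lemma2} contributes $(-1)^{ab}$ with $a=n-M$ constant rows from the $\bm{\gamma}$-side determinant and $b=K-M$ constant rows from the $\bm{\sigma}$-side determinant.

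The genuine gap is that your plan never confronts the step that makes $K>M$ harder than Proposition~\ref{prop1} and that occupies most of Appendix~\ref{appx2}. Working with the ``full-rank $M\times M$ Wishart matrix $\mb{W}\mb{W}^\dagger$'' does not remove the rank deficiency: after Sylvester's flip, the integral over $\mb{U}\in\mc{U}_K$ still couples $\mb{\Gamma}$ to the rank-$M$ matrix $\mb{W}^\dagger\mb{W}$, so the full-rank formula (\ref{eqI1_b}) must be degenerated by sending $\omega_{M+1},\ldots,\omega_K\to 0$. The paper shows this limit cannot be taken directly on the entries $(1+\omega_i\gamma_j)^{s+K-1}$ of (\ref{eqI1_c}): one must first re-expand them by the generalized binomial series (\ref{eqBinomial}), cancel the first $K-n$ summands against the monomial rows (they are linear combinations of $\{\omega_j^{i-1}\}$), resum the remainder into ${}_2F_1$ functions, and analytically continue beyond $|\omega_i\gamma_j|<1$ \emph{before} performing the $\bm{\omega}$-integration; otherwise the interchange of summation and integration is invalid and the integral in (\ref{eqphi_e}) is not under control. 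This rearrangement-plus-continuation is also the true origin of the ${}_3F_1$ entries of $\mc{B}$ and $\mc{C}$: they are Laplace transforms, via \cite[Eq.~(7.811.1)]{gradshteyn2014}, of those ${}_2F_1$'s --- not a mere parameter shift caused by the changed power weight $\omega^{K-M}$ in the Wishart density, as your proposal asserts. Relatedly, your bookkeeping of the polynomial block is off: of the $K-n$ rows $\{\sigma_j^{i-1}\}$ of $\mc{B}$, only $M-n$ trace back to the $\bm{\gamma}$-regularization you describe; the remaining $K-M$ arise from the degenerate $\bm{\omega}$-limit of the HCIZ integral (the rows $\{(-\sigma_j)^{1-i}\}$), i.e., precisely from the rank-deficiency mechanism your plan leaves untreated. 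As written, an execution of your plan would stall at this point.
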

\begin{proof}
The proof of Proposition~\ref{prop2} is in Appendix~\ref{appx2}.
\end{proof}

\begin{figure*}[!t]
	\normalsize
	\setcounter{equation}{21}
	
	\begin{equation}
		\mc{C} = \left[\begin{array}{cc}
		\{0\}_{1\le i\le K-M, 1\le j\le n-M} & \{\sigma_j^{i-1}\}_{1\le i\le K-M, 1\le j\le K} \\
 		\{\gamma_{i}^{j-1}\}_{1\le i\le n, 1\le j\le n-M} & \left\{\gamma_{i}^{n-M}\sigma_{j}^{K-M}{}_3 F_1\left(\begin{array}{c} 1,1,-M\\ K-M+1 \end{array}\middle|-\sigma_j\gamma_i\right)	\right\}_{1\le i\le n, 1\le j\le K}
		\end{array}\right].\label{eqC}
	\end{equation}
	
	\setcounter{equation}{\value{MYtempeqncnt}}
	\hrulefill
	\vspace*{4pt}
\end{figure*}
\setcounter{equation}{22}

According to the definition (\ref{eqHyper}), the generalized hypergeometric functions in (\ref{eqA}), (\ref{eqB}), and (\ref{eqC}) reduce to finite series summations given as:
\begin{align*}
&{}_2 F_0\left(\begin{array}{c} M-K+1,-K\\ - \end{array}\middle|-x\right)\nonumber\\
&\qquad = \sum_{l=0}^K \frac{\Gamma(M-K+l+1)\Gamma(K+1)x^l}{\Gamma(M-K+1)\Gamma(K+1-l)\Gamma(l+1)},\\
&{}_3 F_1\left(\begin{array}{c} 1,1,-M\\ K-M+1 \end{array}\middle|-x\right)\nonumber\\
&\qquad = \sum_{l=0}^M \frac{\Gamma(l+1)\Gamma(M+1)\Gamma(K-M+1)x^l}{\Gamma(K-M+l+1)\Gamma(M+1-l)}.
\end{align*}
The Propositions~\ref{prop1} and~\ref{prop2} can be used to compute $\widetilde{f_E}(\bm{\gamma})$ when $\bm{\gamma}$ is rank deficient, i.e., $n<K$. The rank deficiency of $\bm{\gamma}$ may be due to the power optimization process, when Alice does not allocate power to certain eigen-channels, so as to reduce the information rate towards Eve.

\subsection{Examples of Numerical Results}

In this subsection, we present numerical results to validate the approximate expressions derived in Propositions~\ref{prop1} and~\ref{prop2}. Note that in this section, we do not optimize the power allocation; this is the subject of the next section. We denote $R_E=f_E(\bm{\gamma}_s+\bm{\gamma}_a;\mb{V}_1)-f_E(\bm{\gamma}_a;\mb{V}_1)$ as the information rate between Alice and Eve, as per (\ref{eqRs}), and denote $\widetilde{R_E}=\widetilde{f_E}(\bm{\gamma}_s+\bm{\gamma}_a)-\widetilde{f_E}(\bm{\gamma}_a)$ as the approximation of $R_E$, where $\widetilde{f_E}(\cdot)$ is calculated by (\ref{eqfe_tilde1}) when $K\le M$, and by (\ref{eqfe_tilde2}) and (\ref{eqfe_tilde3}) when $K>M$. In specific, we illustrate the variation range of $R_E$ in Fig.~\ref{figRE} (a) and (b) as the vertical bars, i.e., the length of each bar is $\max R_E -\min R_E$, where the variation of $R_E$ is due to $10^4$ samples of the precoding matrices $\mb{V}_1$, a.k.a., $10^4$ samples of the main channel $\mb{H}$. Recall that the information rate $R_E$ is calculated by averaging the eavesdropping channel coefficients $\mb{W}$ only, while the precoding $\mb{V}_1$ is fixed. The values of $R_E$ is then compared with the corresponding $\widetilde{R_E}$. In Fig.~\ref{figRE} (c) and (d), we plot the Cumulative Distribution Function (CDF) of the approximation error $\widetilde{R_E}-R_E$, which is induced by replacing the fixed $\mb{V}_1$ in $R_E$ with the random Haar unitary matrix $\mb{U}$ in $\widetilde{R_E}$. When forming the cooperative MIMO system, the transmit nodes are randomly distributed in a circular area with the radius $r_T$ from 1 to 9 meters. The distance between the transmit head node and Eve is set to $d_E=30$ meters. The number of receive antennas at Eve is assumed to be $M=2$, 4, or 6, while the number of transmitters at Alice is set to $K=2$ or 6. For each $K$, the power allocation vectors $\bm{\gamma}_s$ and $\bm{\gamma}_a$ are randomly chosen and then fixed for random samples of the fast fading $\mb{W}$.

When $r_T=1$, the transmit nodes are distributed within a small area and their path losses towards Eve have similar value. The corresponding CDF curves in Fig.~\ref{figRE} (c) and (d) show that the approximation error $\widetilde{R_E}-R_E$ can be bounded within 0.4 nats/s/Hz. As the cluster radius increases, the transmit nodes are distributed in a larger area and the values of the elements of $\mb{\Sigma}$ are more dispersive. When $r_T$ increases to 9 meters, the approximation error can be up-to 0.4 nats/s/Hz when there are 2 transmit nodes, and up-to 0.7 nats/s/Hz when there are 6 transmit nodes. Overall, the approximation $\widetilde{R_E}$ of the rate $R_E$ is reasonably accurate, especially when $K$ and $M$ are large. In all the considered cases, the approximation $\widetilde{R_E}$ over-estimates the corresponding $R_E$ with a large probability, i.e., as seen in Fig.~\ref{figRE}, the error does not take negative values with appreciable probability. Therefore, in the context of the physical-layer security, it is relevant to use $\widetilde{R_E}$ as an approximation of $R_E$ in (\ref{eqRs}) as it yields the lower bound of the achievable secrecy rate. Designing a system by using such a lower bound to configure the information rate of the cooperative MIMO transmissions would not violate the system's secrecy requirements.

\section{Transmit Power Optimization}\label{secOpt}

\subsection{Iterative Power Optimization}\label{secOptAlg}

The secrecy rate maximization (\ref{eqR_sec}) is non-convex and its global optimal solution is hard to obtain in general. To address this issue, we propose to solve the problem (\ref{eqR_sec}) via the successive convex approximations. Specifically, we first approximate the optimizing function $R_s(\bm{\gamma}_s,\bm{\gamma}_a)$ with $\widetilde{R_s}(\bm{\gamma}_s,\bm{\gamma}_a)$ by replacing $f_E(\bm{\gamma};\mb{V}_1)$ in (\ref{eqRs}) with its closed-form approximations $\widetilde{f_E}(\bm{\gamma})$ given in Propositions~\ref{prop1} and~\ref{prop2}. We further approximate the non-convex function $\widetilde{R_s}(\bm{\gamma}_s,\bm{\gamma}_a)$ with a concave function by taking the affine Taylor expansions of the $f_B(\cdot)$ and $\widetilde{f_E}(\cdot)$ components. That is, in a neighborhood of $\bm{\gamma}_s^{(0)},\bm{\gamma}_a^{(0)}\in\mc{P}$, $\widetilde{R_s}(\bm{\gamma}_s,\bm{\gamma}_a)$ can be approximated as:
\begin{align}
&\quad \widetilde{R_s} \left(\bm{\gamma}_s,\bm{\gamma}_a\middle|\bm{\gamma}_s^{(0)},\bm{\gamma}_a^{(0)}\right)\nonumber\\
& = f_B(\bm{\gamma}_s + \bm{\gamma}_a) - f_B\left(\bm{\gamma}_a^{(0)}\right) - \nabla_B\left(\bm{\gamma}_a^{(0)}\right)^{\mathrm{T}}\left(\bm{\gamma}_a - \bm{\gamma}_a^{(0)}\right)\nonumber\\
& \qquad - \max_{1\le i\le L}\left\{\widetilde{f_{E,i}}\left(\bm{\gamma}_s^{(0)}+\bm{\gamma}_a^{(0)}\right) - \widetilde{f_{E,i}}\left(\bm{\gamma}_a^{(0)}\right) \right.\nonumber\\
& \qquad + \nabla_{E,i}\left(\bm{\gamma}_s^{(0)} + \bm{\gamma}_a^{(0)}\right)^{\mathrm{T}}\left(\bm{\gamma}_s+\bm{\gamma}_a - \bm{\gamma}_s^{(0)} - \bm{\gamma}_a^{(0)}\right)\nonumber\\
& \qquad -\nabla_{E,i}\left(\bm{\gamma}_a^{(0)}\right)^{\mathrm{T}}\left(\bm{\gamma}_a - \bm{\gamma}_a^{(0)}\right)\Big\},\label{eqRs_tilde}
\end{align}
where $\nabla_B(\bm{\gamma}^{(0)})=\left[\frac{\partial}{\partial \gamma_1} f_B(\bm{\gamma}),\ldots,\frac{\partial}{\partial \gamma_K} f_B(\bm{\gamma})\right]_{\bm{\gamma}=\bm{\gamma}^{(0)}}^\mathrm{T}$, $\nabla_{E,i}(\bm{\gamma}^{(0)})=\left[\frac{\partial}{\partial \gamma_1} \widetilde{f_{E,i}}(\bm{\gamma}),\ldots,\frac{\partial}{\partial \gamma_K} \widetilde{f_{E,i}}(\bm{\gamma})\right]_{\bm{\gamma}=\bm{\gamma}^{(0)}}^\mathrm{T}$ are the gradients of $f_B(\bm{\gamma})$ and $\widetilde{f_{E,i}}(\bm{\gamma})$ evaluated at $\bm{\gamma}^{(0)}$. As the first term of (\ref{eqRs_tilde}) is a concave function of $\{\bm{\gamma}_s,\bm{\gamma}_a\}$ and the other terms are linear, $\widetilde{R_s}\left(\bm{\gamma}_s,\bm{\gamma}_a\middle | \bm{\gamma}_s^{(0)},\bm{\gamma}_a^{(0)}\right)$ is a concave function. Together with the linear constraint $\sum_{i=1}^K(\gamma_{s,i}+\gamma_{a,i})\le \gamma_E$ in (\ref{eqR_sec}), the maximization of the function $\widetilde{R_s}\left(\bm{\gamma}_s,\bm{\gamma}_a \middle |\bm{\gamma}_s^{(0)},\bm{\gamma}_a^{(0)}\right)$, in the neighborhood of $\{\bm{\gamma}_s^{(0)},\bm{\gamma}_a^{(0)}\}$, can be efficiently solved by convex programming tools, such as CVX~\cite{grant2013,grant2008}.

Next, we derive the expressions of the gradients $\nabla_B$ and $\nabla_{E,i}$, as needed in (\ref{eqRs_tilde}). Based on (\ref{eqfb}), the partial derivative $\frac{\partial}{\partial \gamma_i} f_B(\bm{\gamma})$ is given by:
\begin{equation}
\frac{\partial}{\partial \gamma_i} f_B(\bm{\gamma}) = \frac{\gamma_B \lambda_i}{\gamma_E + \gamma_B\lambda_i\gamma_i}.
\end{equation}
Denote $\mb{V}(\bm{\gamma})=\{\gamma_i^{j-1}\}_{1\le i\le n,1\le j\le n}$ as the Vandermonde matrix. When $M\ge K$, $\widetilde{f_{E,i}}(\bm{\gamma})$ is given by (\ref{eqfe_tilde1}) and the derivative $\frac{\partial}{\partial \gamma_i} \widetilde{f_{E,i}}(\bm{\gamma})$ can be obtained by the Jacobi's formula~\cite{magnus1999} as:
\begin{align}
\frac{\partial}{\partial \gamma_i} \widetilde{f_{E,i}} &(\bm{\gamma}) = \mathrm{tr}\left(\mc{A}^{-1}\frac{\partial}{\partial\gamma_i}\mc{A}\right)\nonumber\\
& - \mathrm{tr}\left(\mb{V}(\bm{\gamma})^{-1}\frac{\partial}{\partial\gamma_i}\mb{V}(\bm{\gamma})\right) - \frac{K-n}{\gamma_i},\label{eqDfe_tilde}
\end{align}
where the partial derivative of a $a\times a$ matrix $\mb{A}(t)$ with respect to the variable $t$ is defined as $\frac{\partial}{\partial t} \mb{A}(t)=\left\{\frac{\partial}{\partial t} \mb{A}_{i,j}(t)\right\}_{1\le i\le a,1\le j\le a}$. Inserting (\ref{eqA}) into (\ref{eqDfe_tilde}) and applying Laplace expansion of the determinants, we obtain: 
\begin{align}
\frac{\partial}{\partial \gamma_i} \widetilde{f_{E,i}}(\bm{\gamma}) & =  -\frac{1}{\det[\mb{V}(\bm{\gamma})]}\sum_{j=1}^n(-1)^{i+j}M_{j,i}^\mb{V}\gamma_i^{j-2}-\frac{K-n}{\gamma_i} \nonumber\\
&+\frac{K(M-K+1)}{\det[\mc{A}]} \sum_{j=1}^K \frac{\sigma_j M_{j,K-n+i}^{\mc{A}}}{(-1)^{K-n+i+j}}\nonumber\\
&\times {}_2F_0\left( \begin{array}{c} M-K+2,1-K\\ - \end{array} \middle|-\sigma_j\gamma_i\right),\label{eqDfe_tilde_b}
\end{align}
where $M_{a,b}^\mc{A}$ and $M_{a,b}^\mb{V}$ are the $(a,b)$ minors of the matrices $\mc{A}$ and $\mb{V}(\bm{\gamma})$. In (\ref{eqDfe_tilde_b}), we applied the differentiation formula~\cite[Eq. (16.3.1)]{olver2010} for the generalized hypergeometric function.

Similarly, when $K>M\ge n$, the derivative $\frac{\partial}{\partial \gamma_i} \widetilde{f_{E,i}}(\bm{\gamma})$ can be obtained by using (\ref{eqfe_tilde2}) as:
\begin{align*}
&\frac{\partial}{\partial \gamma_i} \widetilde{f_{E,i}}(\bm{\gamma})\nonumber\\
& = \mathrm{tr}\left(\mc{B}^{-1}\frac{\partial}{\partial\gamma_i}\mc{B}\right) - \mathrm{tr}\left(\mb{V}(\bm{\gamma})^{-1}\frac{\partial}{\partial\gamma_i}\mb{V}(\bm{\gamma})\right) - \frac{M-n}{\gamma_i}\nonumber\\
& = -\frac{1}{\det[\mb{V}(\bm{\gamma})]}\sum_{j=1}^n(-1)^{i+j}M_{j,i}^\mb{V}\gamma_i^{j-2}-\frac{M-n}{\gamma_i} \nonumber\\
&\qquad +\frac{M}{(K-M+1)\det[\mc{B}]} \sum_{j=1}^K \frac{\sigma_j^{K-M+1} M_{j,K-n+i}^{\mc{B}}}{(-1)^{K-n+i+j}}\nonumber\\
&\qquad \times {}_3F_1\left( \begin{array}{c} 2,2,1-M\\ K-M+2 \end{array} \middle|-\sigma_j\gamma_i\right),
\end{align*}
When $K>n>M$, the derivative $\frac{\partial}{\partial \gamma_i} \widetilde{f_{E,i}}(\bm{\gamma})$ is obtained by using (\ref{eqfe_tilde3}) and is shown on top of the next page.

\begin{figure*}[!t]
	\normalsize
	\setcounter{equation}{26}
	
\begin{align*}
&\frac{\partial}{\partial \gamma_i} \widetilde{f_{E,i}}(\bm{\gamma}) = \mathrm{tr}\left(\mc{C}^{-1}\frac{\partial}{\partial\gamma_i}\mc{C}\right) - \mathrm{tr}\left(\mb{V}(\bm{\gamma})^{-1}\frac{\partial}{\partial\gamma_i}\mb{V}(\bm{\gamma})\right) \nonumber\\
& = \frac{1}{\det[\mc{C}]}\sum_{j=1}^{n-M}\frac{(j-1)M_{j,K-M+i}^{\mc{C}}}{(-1)^{K-M+i+j}}\gamma_i^{j-2} -\frac{1}{\det[\mb{V}(\bm{\gamma})]}\sum_{j=1}^n(-1)^{i+j}M_{j,i}^\mb{V}\gamma_i^{j-2} \nonumber\\
&\qquad +\frac{n-M}{\det[\mc{C}]} \sum_{j=1}^K \frac{\gamma_i^{n-M-1}\sigma_j^{K-M} M_{n-M+j,K-M+i}^{\mc{C}}}{(-1)^{K-2M+n+i+j}} {}_3F_1\left( \begin{array}{c} 1,1,-M\\ K-M+1 \end{array} \middle|-\sigma_j\gamma_i\right)\nonumber\\
&\qquad +\frac{M}{(K-M+1)\det[\mc{C}]} \sum_{j=1}^K \frac{\gamma_i^{n-M}\sigma_j^{K-M+1} M_{n-M+j,K-M+i}^{\mc{C}}}{(-1)^{K-2M+n+i+j}} {}_3F_1\left( \begin{array}{c} 2,2,1-M\\ K-M+2 \end{array} \middle|-\sigma_j\gamma_i\right).
\end{align*}
	
	\setcounter{equation}{\value{MYtempeqncnt}}
	\hrulefill
	\vspace*{4pt}
\end{figure*}
\setcounter{equation}{26}

\begin{figure*}[t!]
\centering
\subfigure[]{\includegraphics[width=2.8in]{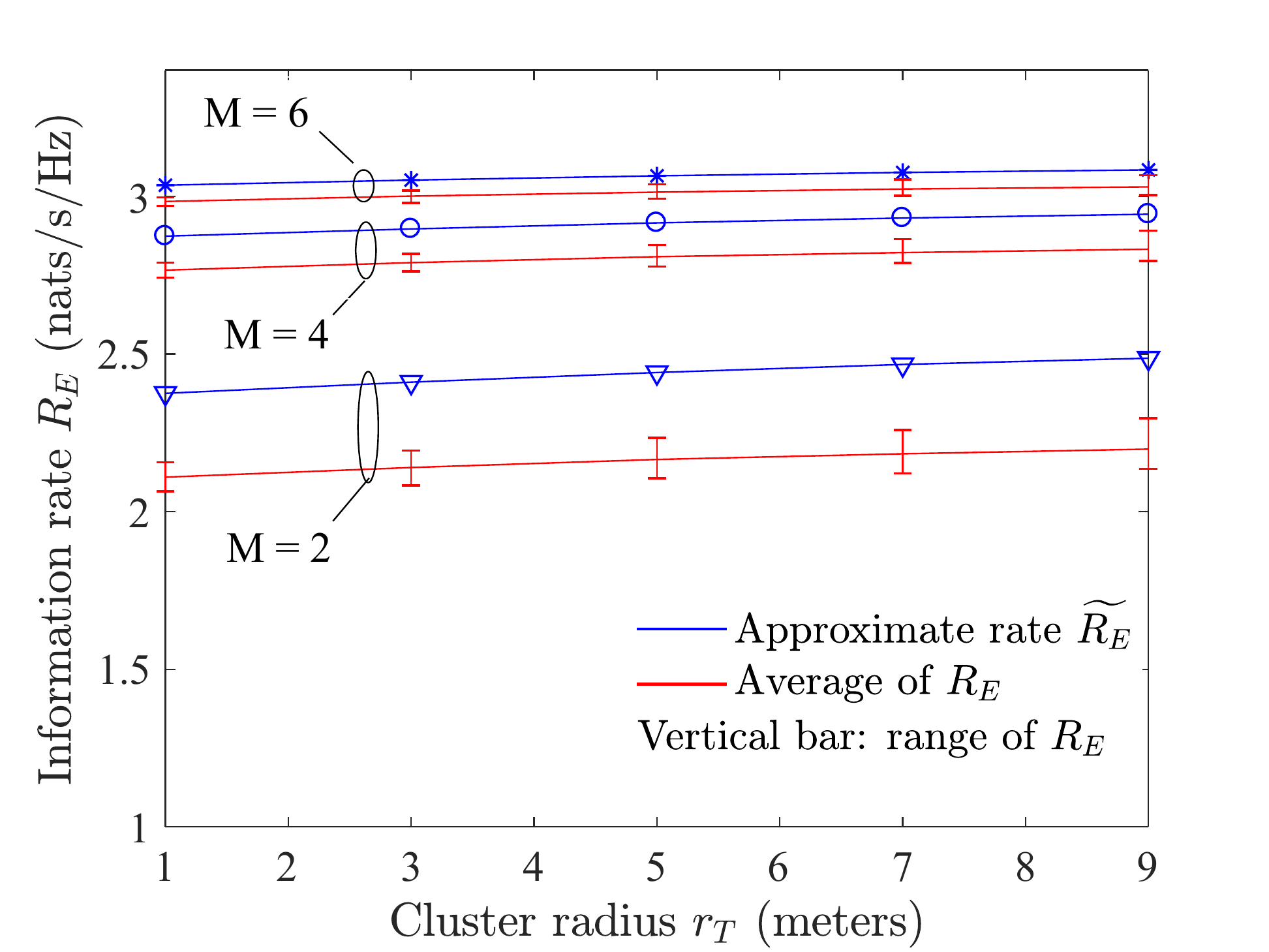}
}
\subfigure[]{\includegraphics[width=2.8in]{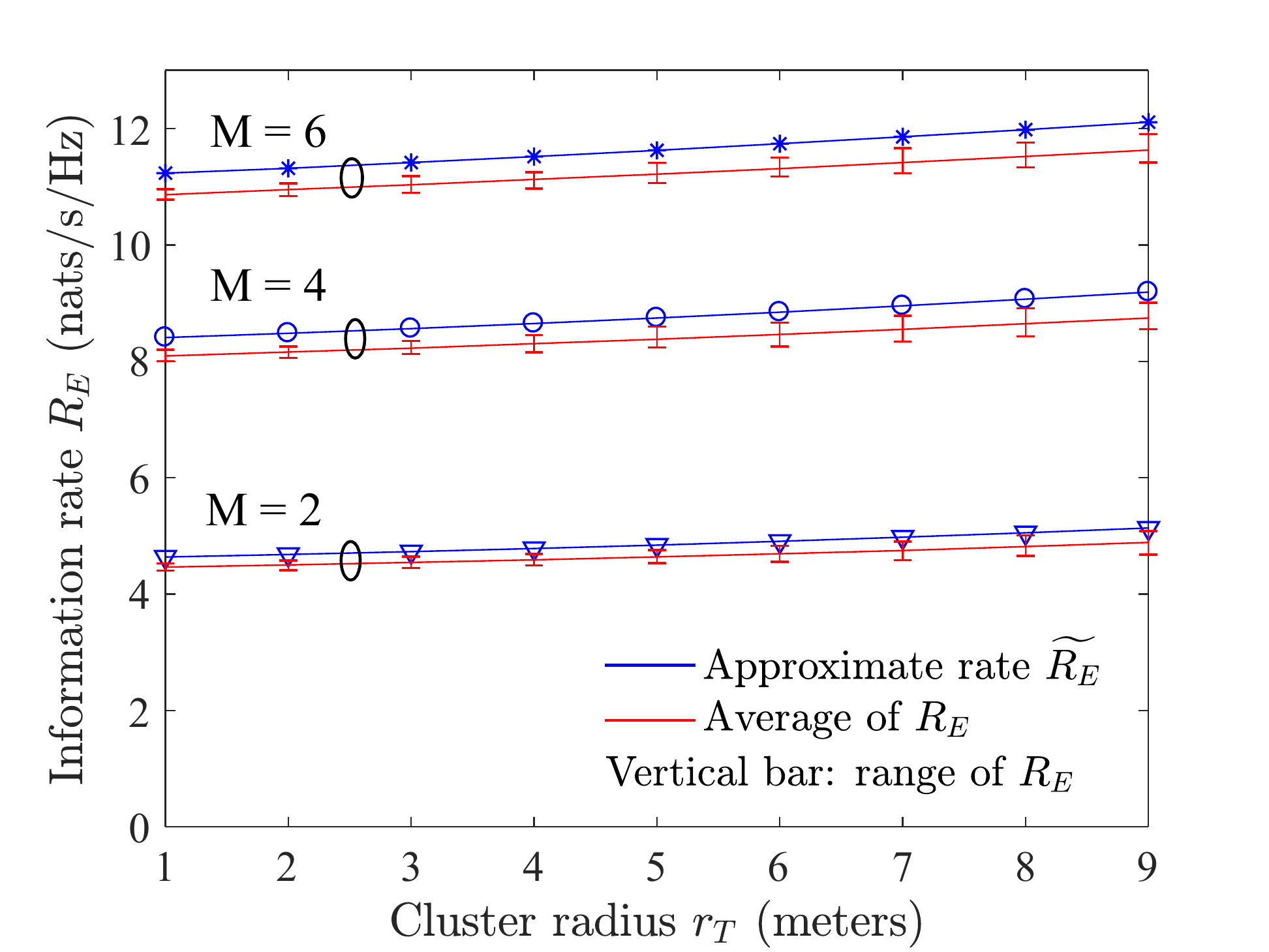}
}\\
\subfigure[]{\includegraphics[width=2.8in]{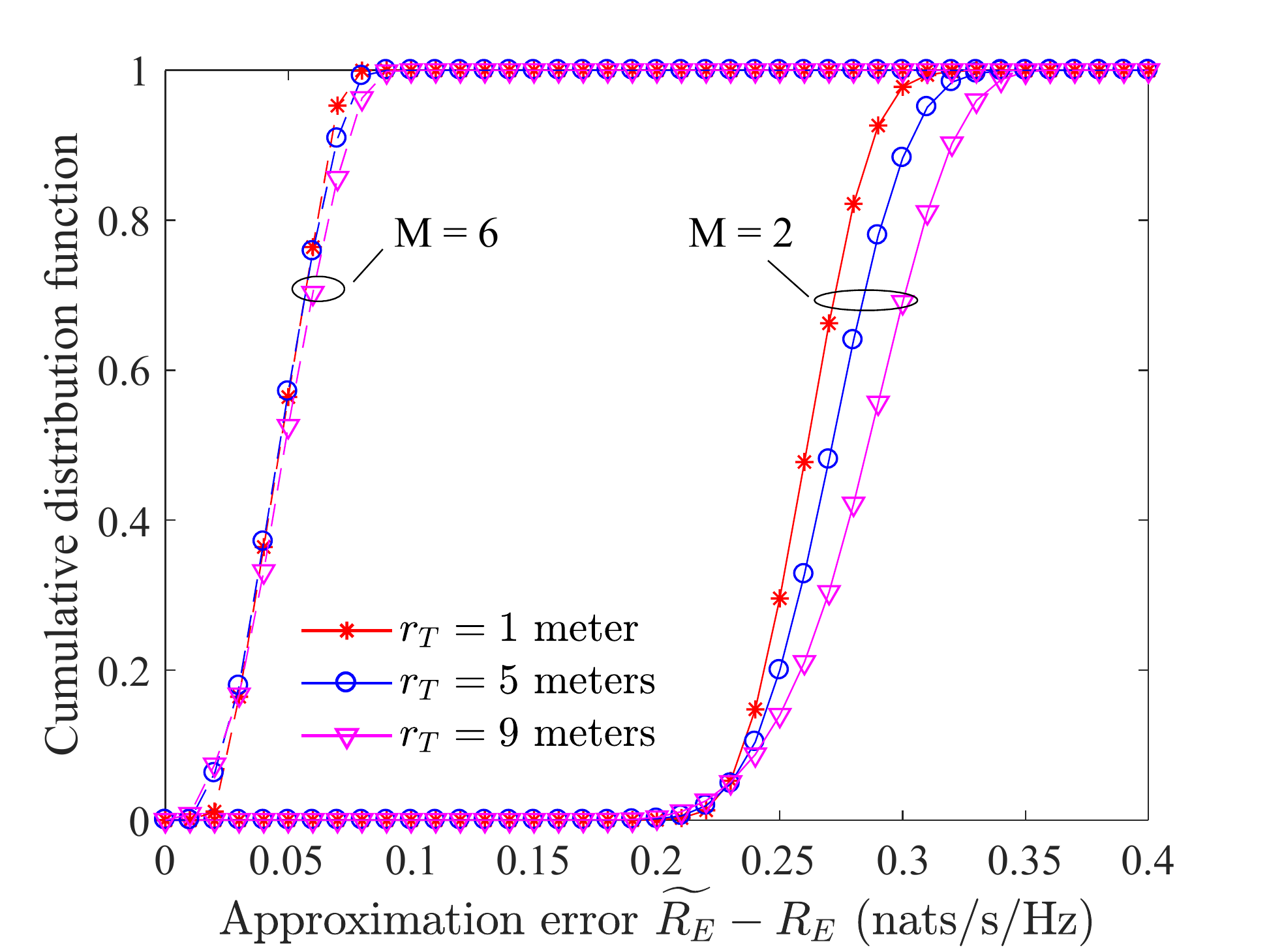}
}
\subfigure[]{\includegraphics[width=2.8in]{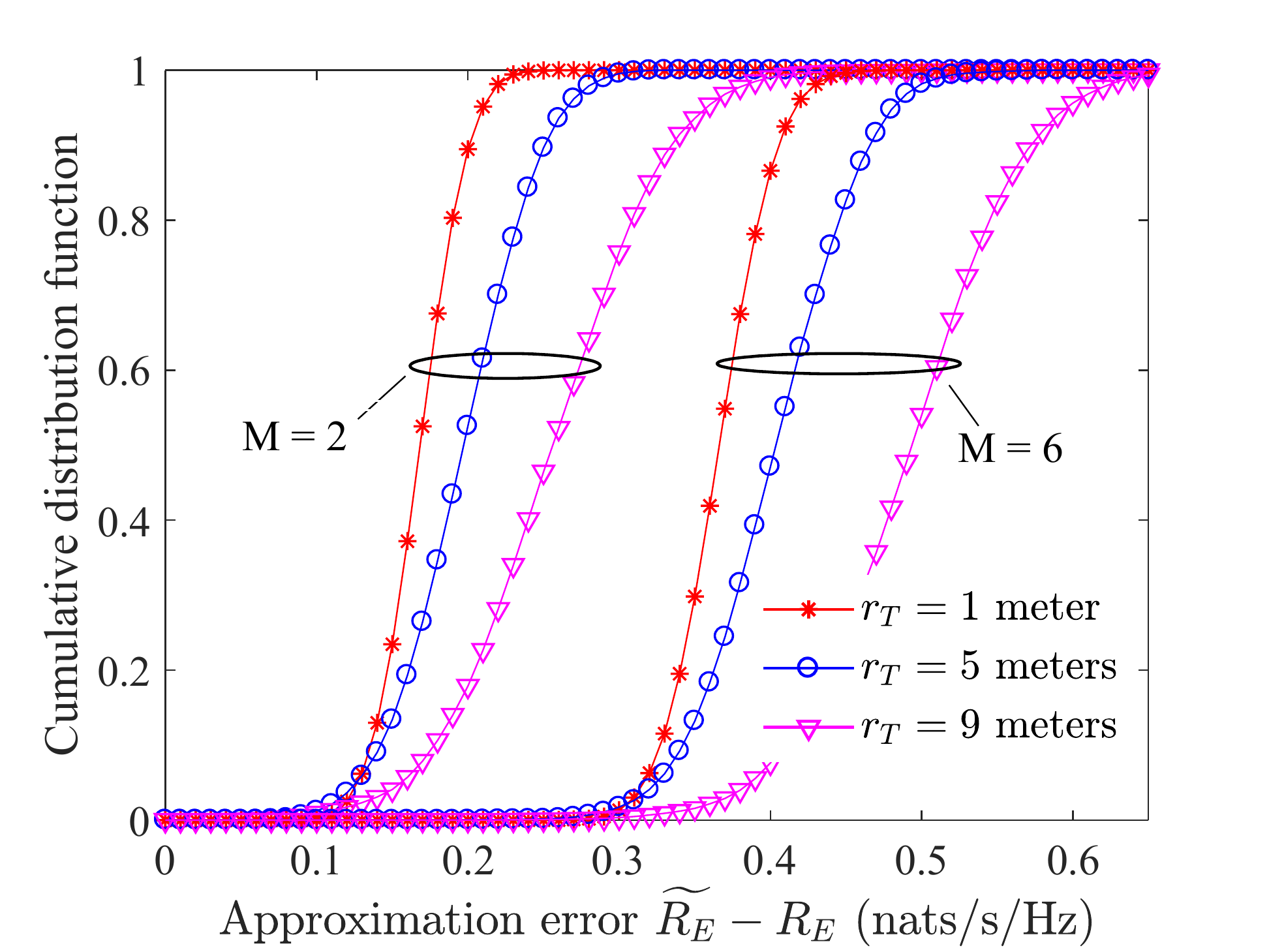}
}
\caption{Comparisons between $R_E$ and $\widetilde{R_E}$. The distance between the head of Alice and Eve is $d_E=30$ meters and the radius of Alice is from $r_T=1$ to 9 meters. The locations of the transmit nodes and the power allocation vectors $\bm{\gamma}_s$ and $\bm{\gamma}_a$ are random generated. Vertical bars in (a) and (b) denote the range of $R_E$ due to different precoding matrix $\mb{V}_1$. $\widetilde{R_E}$ and range of $R_E$ with $K=2$ in (a) and with $K=6$ in (b); CDFs of approximation error $\widetilde{R_E}-R_E$ with $K=2$ in (c) and with $K=6$ in (d).}
\label{figRE}
\end{figure*}

Finally, following similar procedures as described in~\cite{cumanan2014}, the non-convex secrecy rate maximization problem (\ref{eqR_sec}) can be converted into a sequence of concave maximization sub-problems. Specifically, selecting initial points $\bm{\gamma}_s^{(0)},\bm{\gamma}_a^{(0)}\in\mc{P}$, the $k^{th}$ sub-problem is given as:
\begin{align}
\left\{\bm{\gamma}_s^{(k)},\bm{\gamma}_a^{(k)}\right\} = &
\underset{\bm{\gamma}_s,\bm{\gamma}_a\in\mc{P}}{\arg\max}\left[\widetilde{R_s}\left(\bm{\gamma}_s,\bm{\gamma}_a \middle| \bm{\gamma}_s^{(k-1)},\bm{\gamma}_a^{(k-1)}\right)\right]^+,\nonumber\\
& \mathrm{s.t.}\ \sum_{i=1}^K(\gamma_{s,i} + \gamma_{a,i})\le \gamma_E,\label{eqR_sec_tilde}
\end{align}
where $\{\bm{\gamma}_s^{(k)},\bm{\gamma}_a^{(k)}\}$ denotes the optimal solution of the $k^{th}$ sub-problem. The initial signal power allocation $\bm{\gamma}_s^{(0)}$ is selected as the solution of the water-filling algorithm, which only maximizes the information rate between Alice and Bob, i.e., $\bm{\gamma}_{\mathrm{WF}}=\arg\max\{f_B(\bm{\gamma}): \bm{\gamma}\in\mc{P},\sum_{i=1}^K \gamma_i \le \gamma_E \}$. The initial power allocation of the artificial noise $\bm{\gamma}_a^{(0)}$ is chosen as an all-zero vector. The iterative power optimization procedure is summarized as Algorithm~\ref{alg1}. As shown in line 5 of Algorithm~\ref{alg1}, the algorithm terminates when the secrecy rate improvement between consecutive sub-problems is less than a threshold $\epsilon$.

\begin{algorithm}[b]
	\caption{Iterative Power Optimization}\label{alg1}
	\begin{algorithmic}[1]
		\State \textbf{initialize:} $k=1$, $\epsilon>0$, $\bm{\gamma}_s^{(0)} = \bm{\gamma}_{\mathrm{WF}}$, and $\bm{\gamma}_a^{(0)} = \mb{0}$
		\State \textbf{repeat}
		\State\qquad Solve the sub-problem (\ref{eqR_sec_tilde}) with CVX and set the output as $\left\{\bm{\gamma}_s^{(k)}, \bm{\gamma}_a^{(k)}\right\}$;
		\State\qquad $k = k+1$;
		\State \textbf{until} $\widetilde{R_s}\left(\bm{\gamma}_s^{(k+1)},\bm{\gamma}_a^{(k+1)}\middle| \bm{\gamma}_s^{(k)},\bm{\gamma}_a^{(k)}\right) - \widetilde{R_s}\left(\bm{\gamma}_s^{(k)},\bm{\gamma}_a^{(k)}\middle| \bm{\gamma}_s^{(k-1)},\bm{\gamma}_a^{(k-1)}\right)\le \epsilon$ 
	\end{algorithmic}
\end{algorithm}

Consider the consecutive $(k+1)^{st}$ and $k^{th}$ sub-problems, where $\{\bm{\gamma}_s^{(k+1)},\bm{\gamma}_a^{(k+1)}\}$, $\{\bm{\gamma}_s^{(k)},\bm{\gamma}_a^{(k)}\}$, and $\{\bm{\gamma}_s^{(k-1)},\bm{\gamma}_a^{(k-1)}\}$ are all feasible solutions. Based on Algorithm~\ref{alg1}, we have the inequalities $\widetilde{R_s}\left(\bm{\gamma}_s^{(k+1)},\bm{\gamma}_a^{(k+1)}\middle| \bm{\gamma}_s^{(k)},\bm{\gamma}_a^{(k)}\right)\!\ge\!\widetilde{R_s}\!\left(\!\bm{\gamma}_s^{(k)}\!,\!\bm{\gamma}_a^{(k)}\!\middle|\!\bm{\gamma}_s^{(k)},\!\bm{\gamma}_a^{(k)}\!\right)\!\ge\!\widetilde{R_s}\left(\bm{\gamma}_s^{(k)},\bm{\gamma}_a^{(k)}\middle| \bm{\gamma}_s^{(k-1)},\bm{\gamma}_a^{(k-1)}\right)$, which show that the optimal rates $\widetilde{R_s}$ obtained in the sequence of the sub-problems (\ref{eqR_sec_tilde}) is monotonically increasing. Since the achievable secrecy rate is upper-bounded for a finite transmission power, Algorithm~\ref{alg1} will converge to a fixed-point solution. This establishes the convergence of the proposed secrecy rate maximization algorithm. We will also confirm the rate of convergence of Algorithm~\ref{alg1} with numerical results in the next subsection.

\subsection{Numerical Results}

To illustrate the convergence of Algorithm~\ref{alg1}, we present the sequence of the optimized rates $\left\{\widetilde{R_s}\left(\bm{\gamma}_s^{(k)},\bm{\gamma}_a^{(k)}\middle|\bm{\gamma}_s^{(k-1)},\bm{\gamma}_a^{(k-1)}\right)\right\}_{k\ge 1}$ achieved in each step of the iterative power optimizations. We consider two cases of cooperative MIMO channels in presence of an eavesdropper with multiple possible locations. In the first case, we assume $K=N=2$ legitimate transmit and receive nodes, where the channel coefficients of the main channel $\mb{H}_1$ are given as follows:
\begin{equation}
\mb{H}_1 = \left[\begin{array}{cc}
1.97 - 0.92i & 0.98 + 0.47i\\
-0.63-0.035i & 0.019 - 1.24i
\end{array}\right].\label{eqH1}
\end{equation}

In the second case, we assume $K=N=3$ with the channel coefficients given as:
\begin{equation}
\mb{H}_2\! =\! \left[\!\begin{array}{ccc}
\!-1.06-1.65i\! &\! 3.01+0.11i\! &\! -0.08-0.60i\!\\
\!0.09+0.72i\! &\! -0.72-0.59i\! &\! -1.81+0.46i\!\\
\!0.53-0.66i\! &\! 0.17+0.28i\! &\! -0.35+0.59i\!
\end{array}\!\right]\!.\label{eqH2}
\end{equation}
In both cases, the eavesdropper is equipped with 2 receive antennas and can appear at either $L=10$ or at 20 possible locations, which are evenly distributed on a circle with equal distance $d_E=30$ meters towards the transmit head node. In Fig.~\ref{figConverge}, the secrecy rate obtained in each iteration of the power optimization is plotted for the cooperative MIMO channels $\mb{H}_1$ and $\mb{H}_2$. As a comparison, the optimal achievable secrecy rates are also obtained by exhaustive search over the space of the power allocation vectors. As shown in the Fig.~\ref{figConverge}, over the channels $\mb{H}_1$ and $\mb{H}_2$, the secrecy rate converges fast to the corresponding optimal values in about 4 iterations. In both cases, the secrecy rates monotonically increase, which is in line with our prediction in Section~\ref{secOptAlg}. In addition, we also observe that a larger number of Eve's locations does not change the resulting optimal secrecy rate. Therefore, we fix $L=10$ in the following numerical evaluations.

\begin{figure}[t!]
\centerline{\includegraphics[width=2.8in]{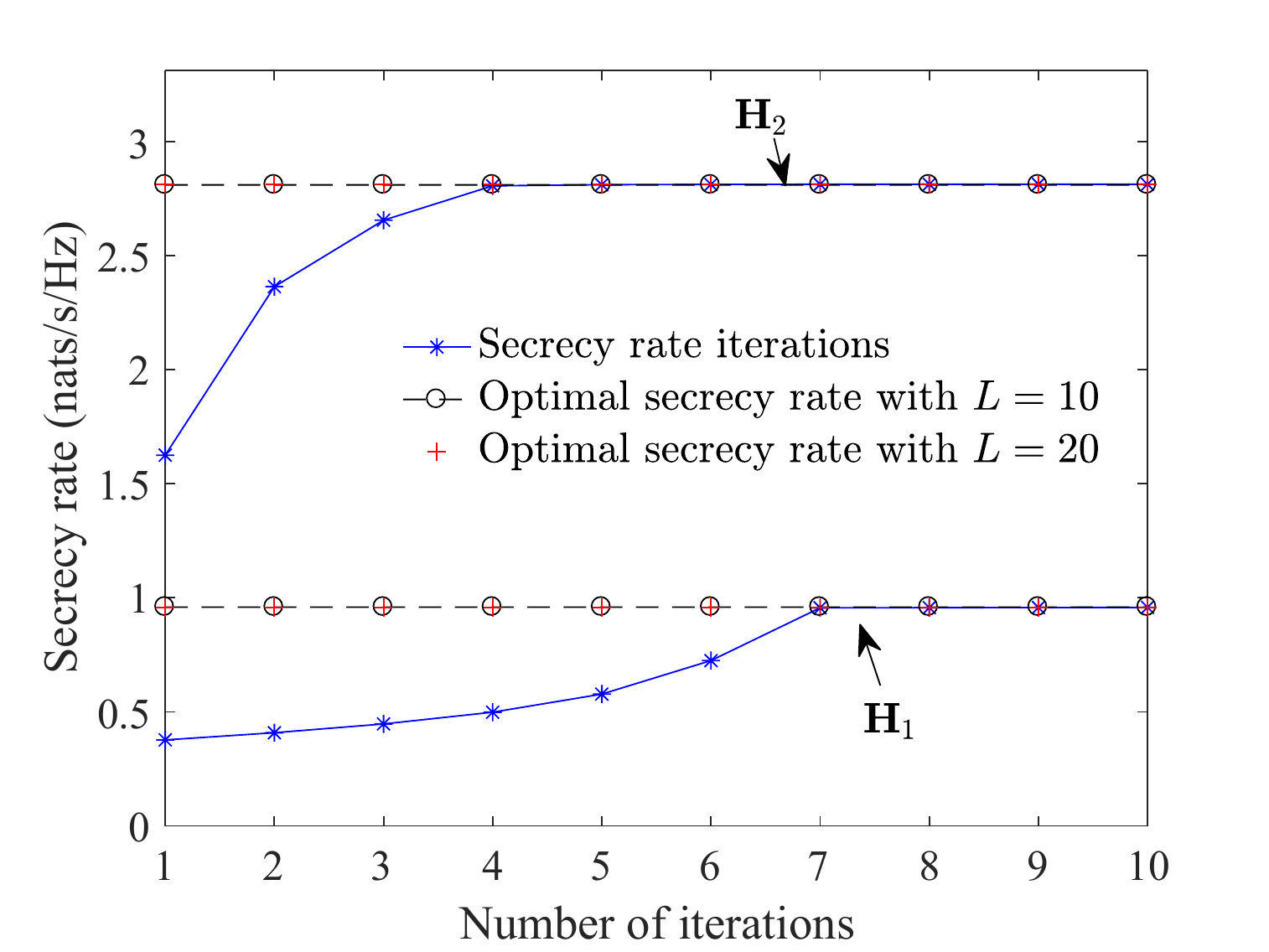}
}
\caption{Convergence of the secrecy rate for the iterative power optimization Algorithm~\ref{alg1}. The sample channels between the legitimate nodes are chosen as $\mb{H}_1$ in (\ref{eqH1}) and $\mb{H}_2$ in (\ref{eqH2}).}
\label{figConverge}
\end{figure}

In Fig.~\ref{figIter}, the required number of iterations of Algorithm~\ref{alg1} is plotted for the cooperative MIMO system with $K=N=2$ or $K=N=4$ transmit and receive nodes, distributed within their corresponding clusters with radius 5 meters. The number of antennas at Eve is set to $M=2$. The distance between head nodes $d_B$ and the distance $d_E$ are set to 30 meters. We evaluate Algorithm~\ref{alg1} for 50 random samples of the legitimate channel $\mb{H}$ and set the stopping criterion of the algorithm as $\epsilon=0.1$. When $K=N=2$, the maximum number of iteration of Algorithm~\ref{alg1} is 10, and for most cases, Algorithm~\ref{alg1} converges within 7 iterations. When $K=N=4$, only 2 or 3 iterations are needed for most of the cases. Note that in each iteration, a simple convex sub-problem is solved and therefore, Algorithm~\ref{alg1} has low computational complexity. These results demonstrate the practicality of our approach.

\begin{figure}[t!]
\centerline{\includegraphics[width=2.8in]{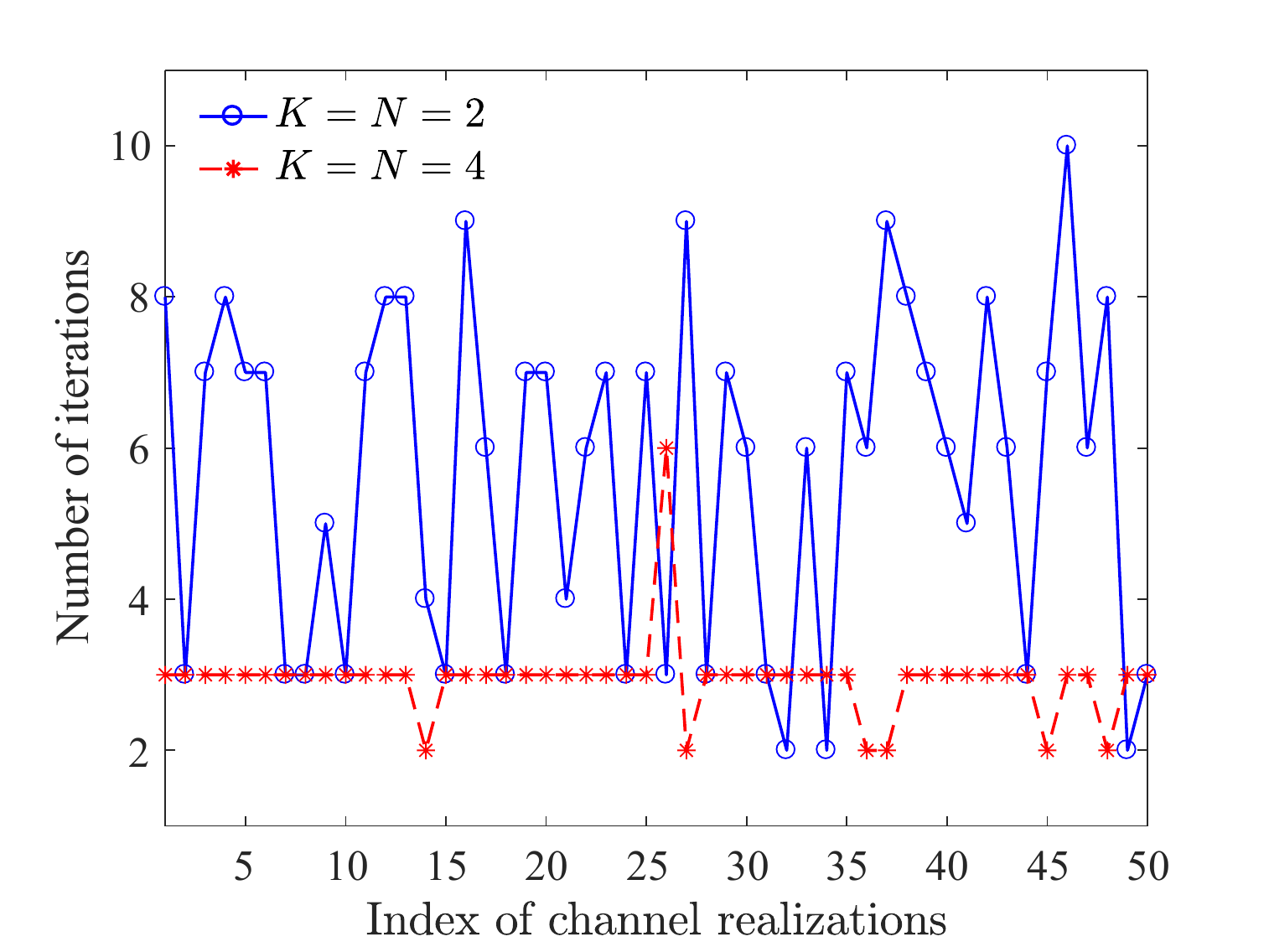}
}
\caption{Number of iterations of Algorithm~\ref{alg1} with stopping criterion $\epsilon=0.1$.}
\label{figIter}
\end{figure}

Next, we investigate the secrecy rate when the cluster heads can flexibly determine the number of cooperative nodes, which are located within a given cluster radius. We assume that there always exists sufficiently many assisting nodes to meet the head node request. The number of legitimate nodes in each cluster is allowed to vary from 1 to 5, while the number of antennas at Eve is fixed to 2. The distance between the legitimate head nodes is set to $d_B=30$ meters and the cooperative nodes are distributed in the corresponding clusters with radius 4 or 8 meters. We assume that the eavesdropper-free region is centered at the transmit head node with radius of either $d_E=20$ or 40 meters. As a comparison, we also plot the achievable secrecy rate obtained in~\cite{he2014}, where Eve could be located anywhere in the network. In this case, Fig.~\ref{figR_sec} shows that the achievable secrecy rate of~\cite{he2014} is zero for $K\le M$, and then increases along with $K$ for $K>M$. When there exists an eavesdropper-free region with a radius 20 meters, the secrecy rate is significantly improved for $K>M$, with 2.4 to 3.8 nats/s/Hz improvement as compared to the corresponding \lq\lq anywhere Eve\rq\rq\ case. When the radius of the eavesdropper-free region increases to 40 meters, the secrecy rate becomes linearly proportional to the number of transmitters for the whole range of $K$. In other words, by leveraging the location constraint, the outage can be eliminated, and the secrecy rate can be efficiently improved with more nodes joining in the cooperative clusters. Fig.~\ref{figR_sec} also shows that the secrecy rate can be also improved by increasing the cluster radius. Note that when $K=N=1$, only the transmit and the receive head nodes are activated, which becomes the conventional non-cooperative system. Results show that no reliable secrecy communication is possible for $d_E=20$, while only a marginal secrecy rate is achieved for $d_E=40$.

\begin{figure}[t!]
\centerline{\includegraphics[width=2.8in]{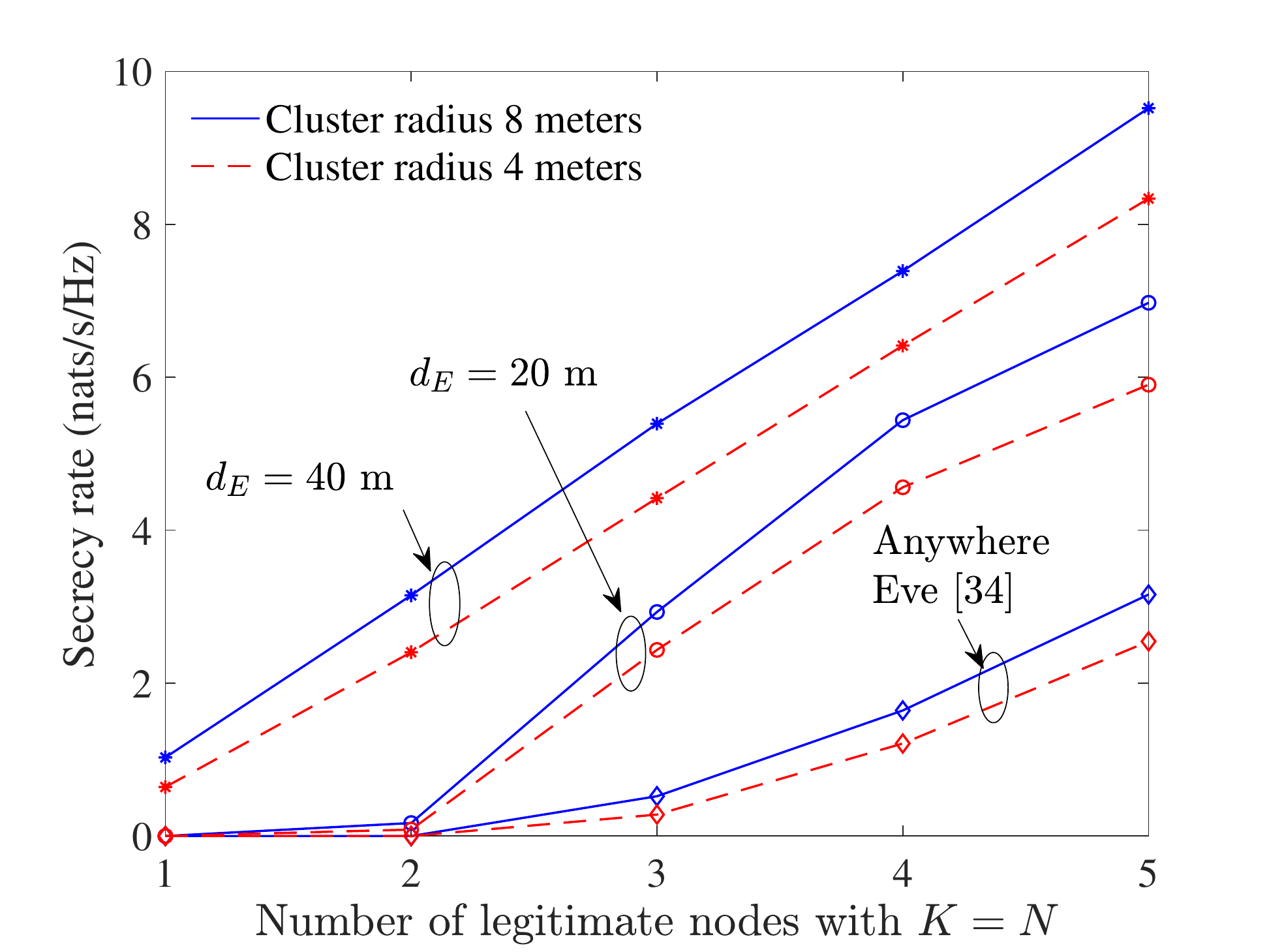}
}
\caption{Secrecy rate $R_\mathrm{sec}$ with equal number of legitimate nodes in transmit and receive clusters, where $M=2$, $d_B=30$ meters, and $d_E=20$ or 40 meters. Solid lines: secrecy rate with cluster radius 8 meters; dashed lines: secrecy rate with cluster radius 4 meters.}
\label{figR_sec}
\end{figure}

Next, we study in more details the impacts of the cluster radius on the secrecy rate of the cooperative MIMO. The secrecy rate is evaluated when $K$ legitimate transmit nodes and $N$ legitimate receive nodes are distributed in the corresponding cluster areas with the cluster radius ranging from 1 to 10 meters, where we set $K=4$ and $N=4$ or 6. The distance between the head nodes of Alice and Bob is set to $d_B=30$ meters. In Fig.~\ref{figR_sec_r}~(a), the number of antennas at Eve is $M=2$, less than both $K$ and $N$, where we also plot the non-zero secrecy rate achieved in~\cite{he2014}. In these settings, it is observed that a significant secrecy rate can be achieved even when $d_E<d_B$. This secrecy rate is achieved due to the degree-of-freedom advantage of the legitimate nodes as opposed to the eavesdropper. As the cluster radius increases, the legitimate nodes are distributed more dispersively within their corresponding clusters and, thus, can reduce the distance between some of the legitimate transmitters and receivers. Therefore, the secrecy rate improves as the cluster radius increases, which is also visible for $d_E=30$ meters. It is noted that although Bob with $N=6$ receive nodes achieves higher secrecy rate, as the radius increases, the rate improvement due to the radius of clusters is less significant compared to the case $N=4$. It is because the additional receive nodes are located further from Alice (e.g., the rightmost node of Bob in Fig.~\ref{figSystem}). Increasing the cluster radius also increases the distance between these nodes and the nodes in Alice, which results in smaller secrecy rate improvement. In Fig.~\ref{figR_sec_r}~(b), we compare the achievable secrecy rate when Eve has equal or more antennas compared to the number of legitimate nodes at Alice and Bob. Specifically, we assume $M=6$ antenna elements at Eve, while the distance between Alice and Eve is $d_E=30$ or 35 meters. In these settings, when the distances $d_B$ and $d_E$ are equal and $N=4$ (i.e., the legitimate receiver has inferior capability in terms of the number of antennas), only marginal secrecy rate can be obtained when the cluster radius is small. By increasing the cluster radius, the secrecy rate is improved, almost linearly proportional to the cluster radius. In addition, the secrecy rate can be improved more effectively for $d_E=35$ meters, i.e., when the legitimate receiver has distance advantage. As a comparison, by increasing both the cluster radius and the number of nodes at Bob, the secrecy rates can be improved by 1 nats/s/Hz for $d_E=30$ and by 2 nats/s/Hz for $d_E=35$.

\begin{figure}[t!]
\centering
\subfigure[]{\includegraphics[width=2.8in]{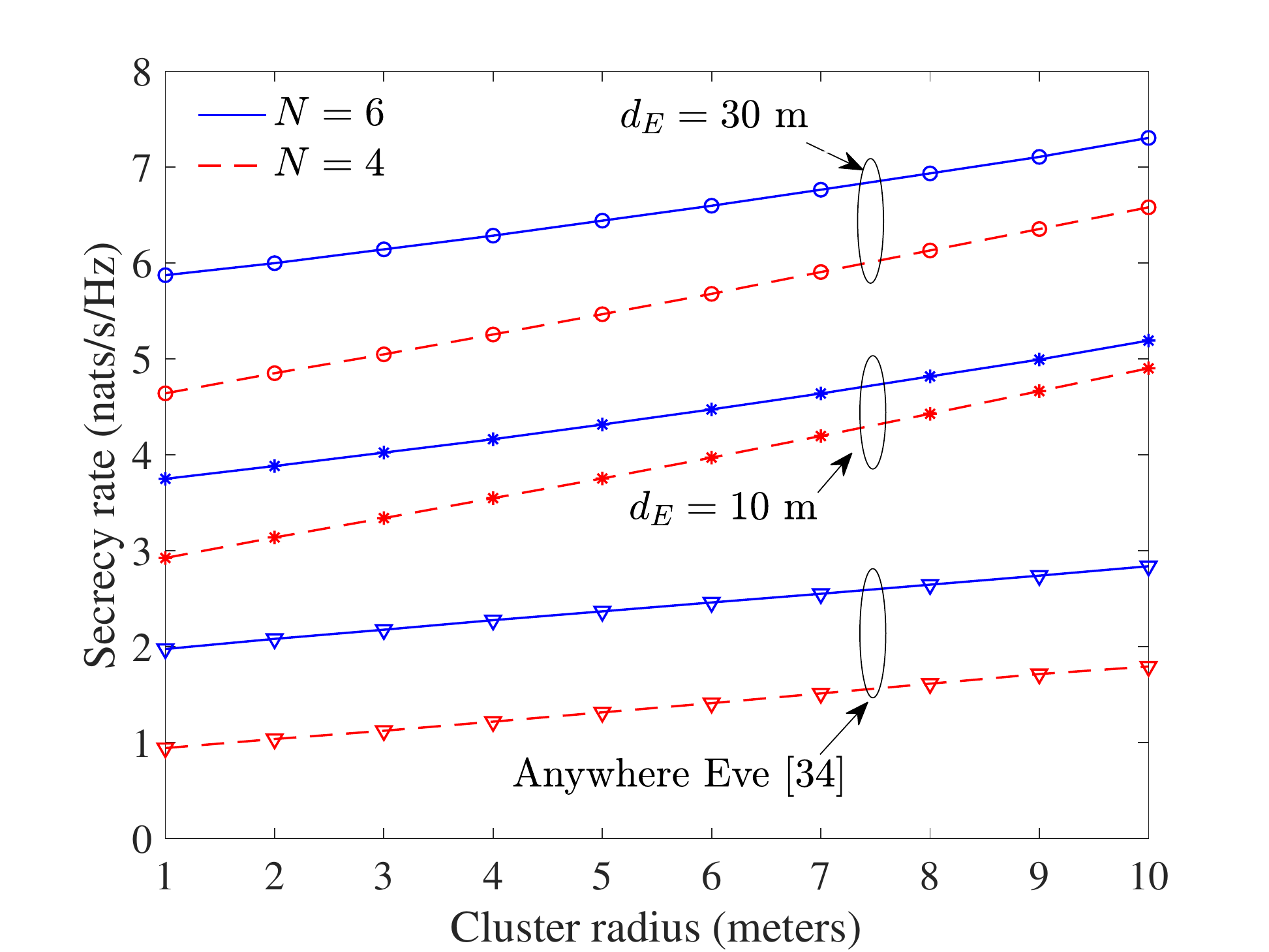}
}
\subfigure[]{\includegraphics[width=2.8in]{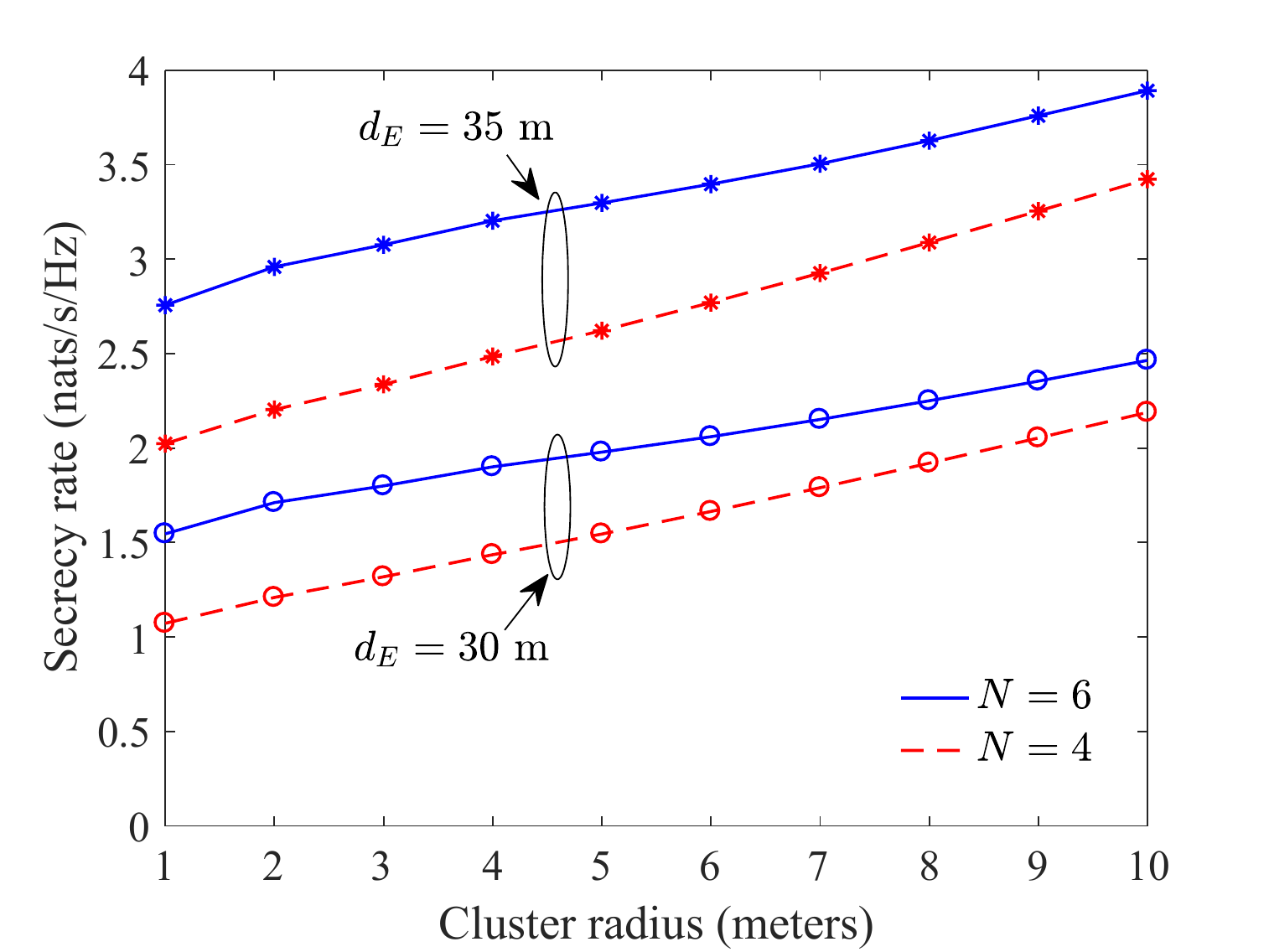}
}
\caption{Secrecy rate $R_\mathrm{sec}$ with $K=4$ and $N=4$ or 6, with distance between head nodes of Alice and Bob being $d_B=30$ meters. The number of antennas at Eve is $M=2$ in (a) and $M=6$ in (b). In (a), the distance between Alice and Eve $d_E=10$ or 30 meters, while in (b) $d_E=30$ or 35 meters.}
\label{figR_sec_r}
\end{figure}

\section{Conclusion}\label{secConclude}

As the small footprint mobile devices can be equipped only with limited number of antenna, the secrecy communications between such devices is difficult to realize when the eavesdropper has more antennas or experiences superior channel conditions. The proposed secrecy cooperative MIMO (i.e., Reconfigurable Distributed MIMO) architecture addresses this issue by temporally activating nearby trusted mobile devices to form cooperative cluster and jointly transmit or receive confidential message, where the communications between clusters resemble a distributed MIMO system. The secrecy cooperative MIMO architecture aims to enable and improve the secrecy transmissions, by activating a sufficiently large number of trusted devices, allowing to leverage the diverse channel conditions.

In this paper, the eigen-direction precoding is applied to construct the combined signal and artificial noise and the corresponding secrecy rate is obtained, when the eavesdropper may be located at arbitrary number of possible representative locations. By using Random Matrix Theory, we obtain accurate approximation for the average rate between Alice and Eve. The proposed approximations enable the affine representations of the information rate between Alice and Eve. By using the affine representations, the non-convex secrecy rate optimization for multiple possible locations of Eve is solved by successive convex approximations, which can be efficiently computed by standard convex optimization tools. Results show that the achievable secrecy rate can be improved quite significantly by leveraging the location constraint of Eve. The secrecy rate can be further improved by enabling additional trusted devices, especially when the devices are distributed more dispersively in each cooperative cluster, i.e., when Alice and/or Bob outperform Eve in terms of the available spatial degrees of freedom and the average transmission distance.


%

\appendices
\section{Proof of Proposition~\ref{prop1}}\label{appx1}

The proofs of Propositions~\ref{prop1} and~\ref{prop2} relies on the following lemmas.
\begin{lemma}(l'H\^opital's rule~\cite{simon2006}).\label{lemma1}
Consider the ratio of determinants of the form $\det[\{f_i(w_j)\}_{1\le i\le a,1\le j\le a}]/\Delta_a(\mb{w})$, where $\mb{w}=[w_1,\ldots,w_a]$. As $w_{b+1},\ldots,w_a$ ($b\le a$) approach zero, the limit of the ratio is given by
\begin{align}
&\lim_{w_{b+1},\ldots,w_a\rightarrow 0}\frac{\det[\{f_i(w_j)\}]}{\Delta_a(\mb{w})}\nonumber\\
& = \frac{\det\left[\left\{f_i^{(j-1)}(w)\middle|_{w=0}\right\}_{\substack{1\le i\le a\\1\le j\le a-b}} \left\{f_i(w_j)\right\}_{\substack{1\le i\le a\\1\le j\le b}}\right]}{\Delta_b(\mb{w})\prod_{i=1}^b w_i^{a-b}\prod_{j=1}^{a-b-1}j!},
\end{align}
where $f^{(m)}(w)$ denotes the $m^{th}$ derivative of $f(w)$.
\end{lemma}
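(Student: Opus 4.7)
The plan is to prove this l'H\^opital-type identity by combining Taylor expansion with the multilinearity of the determinant. First I factorize the denominator Vandermonde as
\[
\Delta_a(\mathbf{w}) = \Delta_b(w_1,\ldots,w_b)\,\Delta_{a-b}(w_{b+1},\ldots,w_a)\prod_{\substack{i\le b\\ j>b}}(w_j - w_i),
\]
so that in the limit $w_{b+1},\ldots,w_a\to 0$ the mixed factor on the right tends to $(-1)^{b(a-b)}\prod_{i=1}^{b} w_i^{a-b}$. The remaining two Vandermondes will later cancel against corresponding factors arising in the numerator.

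Next I attack the numerator by Taylor expanding each entry of the last $a-b$ columns: for every $j>b$ and every row $i$,
\[
f_i(w_j) = \sum_{k=0}^{a-b-1} \frac{f_i^{(k)}(0)}{k!}\,w_j^{k} + O(w_j^{a-b}).
\]
Using multilinearity in the columns $b+1,\ldots,a$, the determinant becomes a sum, over assignments $k_{b+1},\ldots,k_a\in\{0,\ldots,a-b-1\}$, of determinants whose $j$-th column (for $j>b$) is $(f_i^{(k_j)}(0)\,w_j^{k_j}/k_j!)_i$. Any assignment with a repeated value $k_j=k_{j'}$ yields two proportional columns and hence vanishes, so only bijections $\sigma:\{b+1,\ldots,a\}\to\{0,\ldots,a-b-1\}$ contribute at leading order. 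Pulling the scalar $w_j^{\sigma(j)}/\sigma(j)!$ out of each such column and absorbing the column-reorder sign $\mathrm{sgn}(\sigma)$ leaves a $\sigma$-independent determinant whose $j$-th column is $f_i^{(j-1)}(0)$, while the sum $\sum_\sigma \mathrm{sgn}(\sigma)\prod_j w_j^{\sigma(j)}$ reproduces $\Delta_{a-b}(w_{b+1},\ldots,w_a)$ by Leibniz's formula, and the factorials collect to $\prod_{k=0}^{a-b-1}(k!)^{-1}=\prod_{j=1}^{a-b-1}(j!)^{-1}$.

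At this point, $\Delta_{a-b}$ and $\Delta_b$ cancel between numerator and denominator, and moving the $a-b$ derivative columns from positions $b+1,\ldots,a$ to the leftmost positions (as in the stated formula) costs a sign $(-1)^{b(a-b)}$ that exactly cancels the sign from the mixed Vandermonde factor. What remains is precisely the claimed identity. The main obstacle is the combinatorial bookkeeping of signs and factorials, together with confirming that the Taylor remainders $O(w_j^{a-b})$ only produce terms of strictly higher total order than $\Delta_{a-b}(w_{b+1},\ldots,w_a)$, so that they vanish after dividing by the Vandermonde in the limit; once it is pinned down that every non-bijective assignment produces a genuinely vanishing determinant (so only the $(a-b)!$ bijective terms survive), the remainder of the argument is a routine matching of prefactors.
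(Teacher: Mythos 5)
The paper itself offers no proof of this lemma --- it is imported verbatim from the cited reference, where the limit is evaluated by repeated \emph{single-variable} l'H\^opital/Taylor steps (send $w_a\to 0$, then $w_{a-1}\to 0$, and so on). Your combinatorial skeleton is correct and reproduces exactly the right prefactors: the factorization $\Delta_a(\mathbf{w})=\Delta_b\,\Delta_{a-b}\prod_{i\le b<j}(w_j-w_i)$, the observation that repeated derivative indices give proportional columns, the Leibniz reconstruction of $\Delta_{a-b}(w_{b+1},\ldots,w_a)$, the factor $\prod_{j=1}^{a-b-1}(j!)^{-1}$, and the mutual cancellation of the two $(-1)^{b(a-b)}$ signs all check out.

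The genuine gap is precisely at the point you flag as the main obstacle, and your proposed resolution of it does not work. You argue that any expansion term containing a Taylor remainder has total degree strictly greater than $\binom{a-b}{2}=\deg\Delta_{a-b}$ and therefore vanishes after division by $\Delta_{a-b}$ in the limit. Degree counting is insufficient for the joint limit $w_{b+1},\ldots,w_a\to 0$, because $\Delta_{a-b}$ vanishes on the diagonals and hence is not bounded below by any power of $\max_j|w_j|$; moreover an individual expansion term (unlike the full determinant) need not vanish when two of the $w_j$ coincide. Concretely, take $a=2$, $b=0$ and write $f_i(w)=f_i(0)+f_i'(0)w+R_i(w)$ with $R_i(w)=O(w^2)$: the expansion of $\det[f_i(w_j)]_{1\le i,j\le 2}$ contains the term $T$ whose columns are $(f_i(0))_i$ and $(R_i(w_2))_i$, and along $w_1=\epsilon$, $w_2=\epsilon+\epsilon^{10}$ one has $T=O(\epsilon^2)$ while $\Delta_2(\mathbf{w})=w_2-w_1=\epsilon^{10}$, so $T/\Delta_2$ diverges. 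The remainder contribution dies only through cancellation \emph{between} terms --- here $T$ pairs with the term having columns $(R_i(w_1))_i,(f_i(0))_i$ to give a column $(R_i(w_2)-R_i(w_1))_i$, and $(R_i(w_2)-R_i(w_1))/(w_2-w_1)=R_i'(\xi_i)=o(1)$ by the mean value theorem --- not term by term. To repair the proof you can: (i) take the limit one variable at a time, each step being a legitimate one-variable Taylor/l'H\^opital argument (the route of the cited reference); or (ii) evaluate the limit only along a generic ray $w_{b+j}=c_j t$, $t\to 0$, with distinct $c_j$, where $\Delta_{a-b}=t^{\binom{a-b}{2}}\Delta_{a-b}(\mathbf{c})$ makes your degree counting sound --- this suffices for the paper's application, since the quantities the lemma is applied to in Appendices~\ref{appx1} and~\ref{appx2} are continuous in the parameters, so any path determines the limit; or (iii) group the remainder terms into combinations antisymmetric in $w_{b+1},\ldots,w_a$ before dividing by $\Delta_{a-b}$. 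As written, however, the step fails.
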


\begin{lemma}(Generalized Andr\'{e}ief integral~\cite{kieburg2010}).\label{lemma2}
Consider the integral
\begin{align*}
\mc{J} = \int_{\mbb{C}^L} & \det\left[\begin{array}{c}\left\{r_{i,j}\right\}_{\substack{1\le i\le a,1\le j\le L+a}}\\
\left\{R_j(w_i)\right\}_{\substack{1\le i\le L,1\le j\le L+a}}\end{array}\right]\\
&\times \det\left[\begin{array}{c}\left\{s_{i,j}\right\}_{\substack{1\le i\le b,1\le j\le L+b}}\\
\left\{S_j(w_i)\right\}_{\substack{1\le i\le L,1\le j\le L+b}}\end{array}\right]\mathrm{d}w_1\ldots\mathrm{d}w_L,
\end{align*}
where the functions $R_j(\cdot)$ and $S_j(\cdot)$ are such that the integral is convergent. Then, the following identity holds:
\begin{align}
&\mc{J} = (-1)^{ab}L!\nonumber\\
&\det\left[
\arraycolsep=1pt\def\arraystretch{1}
	\begin{array}{cc} 
		\{0\}_{\substack{1\le i\le b\\1\le j\le a}} & \{s_{i,j}\}_{\substack{1\le i\le b\\1\le j\le L+b}} \\
		\{r_{j,i}\}_{\substack{1\le i\le L+a\\1\le j\le a}} & \left\{\int_{\mbb{C}}R_i(w)S_j(w)\mathrm{d}w\right\}_{\substack{1\le i\le L+a\\1\le j\le L+b}}
	\end{array}\right].
\end{align}

\end{lemma}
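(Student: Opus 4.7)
The plan is to reduce Lemma~\ref{lemma2} to the classical Andr\'eief identity --- the special case $a=b=0$, which states
\begin{equation*}
\int_{\mbb{C}^L}\det[f_i(w_j)]_{1\le i,j\le L}\,\det[g_i(w_j)]_{1\le i,j\le L}\,dw_1\cdots dw_L = L!\det\!\left[\int_{\mbb{C}} f_i(w)g_j(w)\,dw\right]_{1\le i,j\le L},
\end{equation*}
and which I take as a standard result. The key idea is to peel off the constant rows in each integrand determinant by Laplace cofactor expansion, decouple the remaining $L$-fold integral using the classical identity above, and finally re-assemble the resulting double sum as the claimed block determinant.

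Explicitly, I would expand the first integrand determinant along its top $a$ constant rows as
\begin{equation*}
\det\!\begin{bmatrix}\{r_{i,j}\}\\ \{R_j(w_i)\}\end{bmatrix} = \sum_{\substack{I\subset\{1,\ldots,L+a\}\\|I|=a}} \varepsilon(I)\det[r_{i,j}]_{1\le i\le a,\,j\in I}\det[R_j(w_i)]_{1\le i\le L,\,j\in I^c},
\end{equation*}
where $\varepsilon(I)=(-1)^{a(a+1)/2+s(I)}$ with $s(I)=\sum_{i\in I}i$ is the standard Laplace sign. An analogous expansion over size-$b$ subsets $J\subset\{1,\ldots,L+b\}$ with sign $\varepsilon(J)$ applies to the second integrand. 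Substituting both, exchanging sum with integral (valid by the stated convergence), and applying the classical Andr\'eief identity to the resulting decoupled $L$-fold integral yields
\begin{equation*}
\mc{J} = L!\sum_{I,J}\varepsilon(I)\varepsilon(J)\det[r_{i,j}]_{j\in I}\det[s_{i,j}]_{j\in J}\det\!\left[\int_{\mbb{C}} R_{j_1}(w)S_{j_2}(w)\,dw\right]_{j_1\in I^c,\,j_2\in J^c}.
\end{equation*}

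The remaining task is to identify this double sum with the generalized Laplace expansion of the target $(L+a+b)\times(L+a+b)$ block determinant, carried out simultaneously along its first $b$ rows and its first $a$ columns. Because the top-left $b\times a$ block is zero, any nonzero contribution to the double expansion must draw its $b$ expansion-columns from the last $L+b$ columns and its $a$ expansion-rows from the last $L+a$ rows; matching these selections with $J$ (shifted by $a$) and $I$ (shifted by $b$) reproduces exactly the $r$-, $s$-, and integral-minors above --- using the elementary permutation identity $\det[r_{j,i}]_{i\in I,\,1\le j\le a}=\det[r_{i,j}]_{1\le i\le a,\,j\in I}$ to reconcile the transposed indexing of the bottom-left block. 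The main obstacle is the resulting sign bookkeeping, but once set up it is transparent: the shifted Laplace sign for the selected column set $a+J\subset\{a+1,\ldots,a+L+b\}$ equals $(-1)^{b(b+1)/2+s(a+J)}=(-1)^{b(b+1)/2+s(J)+ab}$, so the shift $s(a+J)=s(J)+ab$ introduces an extra global factor $(-1)^{ab}$ relative to the unshifted signs $\varepsilon(I)\varepsilon(J)$ that appeared in the expansion of $\mc{J}$. Collecting all the parities yields the claimed prefactor $(-1)^{ab}L!$ and closes the proof.
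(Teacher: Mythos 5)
Your proof is correct, but there is nothing in the paper to compare it against: the paper does not prove Lemma~\ref{lemma2} at all --- it imports the identity verbatim from the cited work of Kieburg and Guhr~\cite{kieburg2010} and uses it as a black box in the proofs of Propositions~\ref{prop1} and~\ref{prop2}. Your argument is therefore a self-contained substitute for the citation, and it holds up: the generalized Laplace expansion of each integrand determinant along its constant rows, the term-wise application of the classical Andr\'eief identity to the two $L\times L$ determinants $\det[R_j(w_i)]_{j\in I^c}$ and $\det[S_j(w_i)]_{j\in J^c}$ (the subset sums are finite, so the exchange of sum and integral only needs each pairwise product $R_iS_j$ to be integrable, which is the natural reading of the convergence hypothesis), and the re-assembly into the block determinant via a row-then-column Laplace expansion all check out. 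The sign bookkeeping is the only delicate point and you got it right: the vanishing $b\times a$ corner forces the $b$ expansion columns into the shifted index set $a+J'$, so $s(a+J')=s(J')+ab$ produces exactly the global $(-1)^{ab}$, while the transpose identity $\det[r_{j,i}]_{i\in I,\,1\le j\le a}=\det[r_{i,j}]_{1\le i\le a,\,j\in I}$ reconciles the bottom-left block with the minors arising from the expansion of $\mc{J}$. One presentational remark: you describe the final step as a ``simultaneous'' expansion along the first $b$ rows and first $a$ columns, but what your computation actually performs (and what is cleanest to justify) is a sequential expansion --- first along the top $b$ rows, then expanding each complementary $(L+a)\times(L+a)$ minor along its first $a$ columns; stating it that way avoids having to invoke a two-sided Laplace formula.
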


We are now ready to prove Proposition~\ref{prop1}. Consider the quantity
\begin{equation}
\phi(s) = \mbb{E}\left[\det\left(\mb{I} + \mb{\Sigma}^{1/2}\mb{W}^\dagger\mb{W}\mb{\Sigma}^{1/2}\mb{U}\mb{\Gamma}\mb{U}^\dagger\right)^s\right].\label{eqphi_a}
\end{equation}
The matrix $\mb{W}$ is complex Gaussian distributed with the density
\begin{equation}
f(\mb{W}) = \pi^{-M K}\exp\left(-\mathrm{tr}\mb{W}\mb{W}^\dagger\right).\label{eqfW}
\end{equation}
Inserting (\ref{eqfW}) into (\ref{eqphi_a}) and applying the change-of-variables $\mb{X}=\mb{W}\mb{\Sigma}^{1/2}$, we obtain
\begin{align}
&\phi(s) = \nonumber\\
& \frac{\int_{\mc{U}_K}\!\int_{\mc{M}_{M,K}}\!\det\left(\mb{I}+\mb{X}^\dagger\mb{X}\mb{U}\mb{\Gamma}\mb{U}^\dagger\right)^s\! e^{-\mathrm{tr}\mb{X}^\dagger\mb{X}\mb{\Sigma}^{-1}}\!\mathrm{d}\mb{X}\!\mathrm{d}\mb{U}}{\det[\mb{\Sigma}]^M\int_{\mc{M}_{M,K}}\exp\left(-\mathrm{tr}\mb{X}^\dagger\mb{X}\right)\mathrm{d}\mb{X}},\label{eqphi_b}
\end{align}
where $\mc{M}_{M,K}$ denotes the space of $M\times K$ complex matrices, $\mathrm{d}\mb{U}$ denotes the normalized Haar measure on the unitary group $\mc{U}_{K}$, and $\mathrm{d}\mb{X}$ defines the measure $\mathrm{d}\mb{X}=\prod_{i=1}^M\prod_{j=1}^K \Re(\mb{X}_{i,j})\Im(\mb{X}_{i,j})$. The denominator of (\ref{eqphi_b}) normalizes the right-hand-side of (\ref{eqphi_b}). 

Next, we apply the eigenvalue decomposition $\mb{X}^\dagger\mb{X}=\mb{V}\mb{\Omega}\mb{V}^\dagger$ with the Jacobian given by~\cite{mathai1997} as $\mathrm{d}\mb{X}=\Delta_K(\bm{\omega})^2\prod_{i=1}^K \omega_i^{M-K} \mathrm{d}\bm{\omega}\mathrm{d}\mb{V}$, where $\mathrm{d}\bm{\omega}=\mathrm{d}\omega_1\ldots\mathrm{d}\omega_K$ and $\mathrm{d}\mb{V}$ is the normalized Haar measure. Then, $\phi(s)$ can be rewritten as
\begin{align}
\frac{\int_{[0,\infty)^K}\!\frac{\Delta_K(\bm{\omega})^2}{\prod_{i=1}^K\omega_i^{K-M}}\!\int_{\mc{U}_{K}}e^{-\mathrm{tr}\mb{V}\mb{\Omega}\mb{V}^\dagger\mb{\Sigma}^{-1}}\mc{I}_1(\mb{\Omega},\mb{V}) \mathrm{d}\mb{V}\!\mathrm{d}\bm{\omega}}{\det[\mb{\Sigma}]^M\int_{[0,\infty)^K}\Delta_K(\bm{\omega})^2\prod_{i=1}^K\omega_i^{M-K}\exp(-\omega_i)\mathrm{d}\bm{\omega}},\label{eqphi_c}
\end{align}
where
\begin{equation}
\mc{I}_1(\mb{\Omega},\mb{V}) = \int_{\mc{U}_K}\det\left(\mb{I}+\mb{V}\mb{\Omega}\mb{V}^\dagger\mb{U}\mb{\Gamma}\mb{U}^\dagger\right)^s\mathrm{d}\mb{U}.\label{eqI1_a}
\end{equation}
We first solve the integral $\mc{I}_1(\mb{\Omega},\mb{V})$ by assuming all the diagonal elements of $\mb{\Gamma}$ being non-zero, i.e., $n=K$. The general result with $0<n<K$ is obtained by taking the limit $\gamma_{n+1},\ldots,\gamma_K\rightarrow 0$. By using the integral identity~\cite[Eq. (3.21)]{gross1989}, $\mc{I}_1(\mb{\Omega},\mb{V})$ can be solved as
\begin{align}\label{eqI1_b}
\frac{\det[(1+\omega_i\gamma_j)^{s+K-1}]}{\Delta_K(\bm{\omega})\Delta_K(\bm{\gamma})}\prod_{j=0}^{K-1}\frac{\Gamma(s+K-j)\Gamma(j+1)}{\Gamma(s+K)},
\end{align}
where the row index $i$ and the column index $j$ in the determinant of (\ref{eqI1_b}) run from 1 to $K$. Using Lemma~\ref{lemma1}, $\mc{I}_1(\mb{\Omega},\mb{V})$ is obtained, with $M\ge K\ge n$, by letting $\gamma_{n+1},\ldots,\gamma_K\rightarrow 0$ as
\begin{align}
&\mc{I}_1(\mb{\Omega},\mb{V}) = \prod_{j=K-n}^{K-1}\frac{\Gamma(s+K-j)\Gamma(j+1)}{\Gamma(s+K)}\nonumber\\
&\frac{\det\left[\left\{\omega_i^{j-1}\right\}_{\substack{1\le i\le K\\ 1\le j\le K-n}} \left\{(1+\omega_i\gamma_j)^{s+K-1}\right\}_{\substack{1\le i\le K\\ 1\le j\le n}}\right]}{\Delta_K(\bm{\omega})\Delta_n(\bm{\gamma})\prod_{i=1}^n\gamma_i^{K-n}}.\label{eqI1_c}
\end{align}
Therefore, the integral $\mc{I}_1(\mb{\Omega},\mb{V})$ is independent of the matrix $\mb{V}$, i.e., $\mc{I}_1(\mb{\Omega})\equiv\mc{I}_1(\mb{\Omega},\mb{V})$, which can be pulled out of the integral over $\mb{V}\in\mc{U}_K$ in (\ref{eqphi_c}). The integration over $\mb{V}$, denoted as $\mc{I}_2(\mb{\Omega})=\int_{\mc{U}_K}\exp(-\mathrm{tr}\mb{V}\mb{\Omega}\mb{V}^\dagger\mb{\Sigma}^{-1})\mathrm{d}\mb{V}$, can be solved by the Harish-Chandra-Itzykson-Zuber integral formula~\cite{itzykson1980} as follows
\begin{align}
\mc{I}_2(\mb{\Omega}) = \frac{\det[\mb{\Sigma}]^{K-1}}{\Delta_K(\bm{\omega})\Delta_K(\bm{\sigma})}\det\left[e^{-\frac{\omega_i}{\sigma_j}}\right]\prod_{j=0}^{K-1}\Gamma(j+1).\label{eqHCIZ}
\end{align}
Moreover, the denominator in (\ref{eqphi_c}) is a Selberg integral~\cite[Eq. (17.6.5)]{mehta2004} and solved as
\begin{align}
\int\frac{\Delta_K(\bm{\omega})^2}{\prod_{i=1}^K\omega_i^{K-M}e^{\omega_i}}\mathrm{d}\bm{\omega} = \prod_{i=1}^K \Gamma(j+1)\Gamma(M-K+j).\label{eqSelberg}
\end{align}
When $K>M$, (\ref{eqSelberg}) also holds by exchanging $M$ and $K$. Inserting (\ref{eqI1_c})-(\ref{eqSelberg}) into (\ref{eqphi_c}), $\phi(s)$ can be derived as in (\ref{eqphi_d}) on top of the next page, where the second equality is due to Lemma~\ref{lemma2}. The integrals in the determinant of (\ref{eqphi_d}) can be represented as the generalized hypergeometric function as
\begin{align*}
\int_0^\infty & x^{M-K}(1+x\gamma_j)^{s+K-1}e^{-\frac{x}{\sigma_i}}\mathrm{d}x = \frac{\Gamma(M-K+1)}{\sigma_i^{K-M-1}} \nonumber\\
& \times {}_2F_0\left(
\begin{array}{c}
M-K+1,1-s-K\\ - 
\end{array}
\middle| -\sigma_i\gamma_j
\right).
\end{align*}
By setting $s=1$, and factoring out $\sigma_i^{M-K+1} \Gamma(M-K+j)$ from the $1^{st}$ to the $(K-n)^{th}$ columns, $\Gamma(M-K+1) \sigma_i^{M-K+1}$ from the last $n$ columns of the determinant on the right-hand-side of (\ref{eqphi_d}), we obtain the desired result as in (\ref{eqfe_tilde1}).

\begin{figure*}[!t]
	\normalsize
	\setcounter{equation}{40}
	
\begin{align}
\phi(s) &= \frac{1}{K!} \left(\prod_{j=K-n}^{K-1}\frac{\Gamma(s+K-j)\Gamma(j+1)}{\Gamma(s+K)}\right)\left(\prod_{j=1}^K\frac{1}{\Gamma(M-K+j)}\right)\frac{\det[\mb{\Sigma}]^{K-M-1}}{\Delta_n(\bm{\gamma})\Delta_K(\bm{\sigma})\prod_{i=1}^n\gamma_i^{K-n}}\nonumber\\
&\qquad\times \int_{[0,\infty)^K}\prod_{i=1}^K\omega_i^{M-K}\det\left[\begin{array}{cc}
\left\{\omega_i^{j-1}\right\}_{\substack{1\le i\le K\\1\le j\le K-n}} & \left\{(1+\omega_i\gamma_j)^{s+K-1}\right\}_{\substack{1\le i\le K\\1\le j\le n}}
\end{array}
\right]\det[e^{-\frac{\omega_i}{\sigma_j}}]\mathrm{d}\bm{\omega}\nonumber\\
&=\left(\prod_{j=K-n}^{K-1}\frac{\Gamma(s+K-j)\Gamma(j+1)}{\Gamma(s+K)}\right)\left(\prod_{j=1}^K\frac{1}{\Gamma(M-K+j)}\right)\frac{\det[\mb{\Sigma}]^{K-M-1}}{\Delta_n(\bm{\gamma})\Delta_K(\bm{\sigma})\prod_{i=1}^n\gamma_i^{K-n}}\nonumber\\
&\qquad\times \det\left[
\begin{array}{cc}
\left\{\sigma_i^{M-K+j}\Gamma(M-K+j)\right\}_{\substack{1\le i\le K\\1\le j\le K-n}} & \left\{\int_0^\infty x^{M-K}(1+x\gamma_j)^{s+K-1}e^{-\frac{x}{\sigma_i}}\mathrm{d}x\right\}_{\substack{1\le i\le K\\1\le j\le n}}
\end{array}
\right],\label{eqphi_d}
\end{align}
	
	\setcounter{equation}{\value{MYtempeqncnt}}
	\hrulefill
	\vspace*{4pt}
\end{figure*}
\setcounter{equation}{41}

\section{Proof of Proposition~\ref{prop2}}\label{appx2}

By following the same procedures as in (\ref{eqphi_a})-(\ref{eqphi_c}), we obtain $\phi(s)$ as
\begin{align}
\frac{\int_{[0,\infty)^M}\!\frac{\Delta_K(\bm{\omega})^2}{\prod_{i=1}^M\omega_i^{M-K}}\!\int_{\mc{U}_{K}}\!e^{-\mathrm{tr}\mb{V}\mb{\Omega}\mb{V}^\dagger\mb{\Sigma}^{-1}}\mc{I}_1(\mb{\Omega},\mb{V}) \mathrm{d}\mb{V}\!\mathrm{d}\bm{\omega}}{\det[\mb{\Sigma}]^M\int_{[0,\infty)^M}\Delta_K(\bm{\omega})^2\prod_{i=1}^M\omega_i^{K-M}\exp(-\omega_i)\mathrm{d}\bm{\omega}},\label{eqphi_e}
\end{align}
where $\mc{I}_1(\mb{\Omega},\mb{V})$ is given in (\ref{eqI1_a}). When $K>M$, the diagonal matrix $\mb{\Omega}$, containing the $M$ non-zero eigenvalues of the Hermitian matrix $\mb{\Sigma}^{1/2}\mb{W}^\dagger\mb{W}\mb{\Sigma}^{1/2}$, is of the form $\mb{\Omega}=\mathrm{diag}([\omega_1,\ldots,\omega_M,0,\ldots,0])$. Before applying Lemma~\ref{lemma1} to (\ref{eqI1_c}) to set the corresponding $\omega_{M+1},\ldots,\omega_K$ to zero, we have to rewrite $(1+\omega_i \gamma_j)^{s+K-1}$ in (\ref{eqI1_c}) to guarantee the convergence of the integral (\ref{eqphi_e}). By the generalized binomial expansion, we have
\begin{equation}\label{eqBinomial}
(1+\omega_i\gamma_j)^{s+K-1} = \sum_{l = 0}^\infty \frac{\Gamma(s+K)(\omega_i\gamma_j)^l}{\Gamma(s+K-l)\Gamma(l+1)},
\end{equation}
where we assume $|\omega_i \gamma_j|<1$ to guarantee the convergence. We will later on extend this expression to arbitrary values of $\omega_i \gamma_j$. Inserting (\ref{eqBinomial}) into (\ref{eqI1_c}), we obtain
\begin{align}
&\mc{I}_1(\mb{\Omega},\mb{V}) = \frac{\prod_{j=K-n}^{K-1}\Gamma(s+K-j)\Gamma(j+1)}{\Delta_K(\bm{\omega})\Delta_n(\bm{\gamma})}\nonumber\\
&\quad\times\det\left[
\begin{array}{c}
\left\{\omega_j^{i-1}\right\}_{\substack{1\le i\le K-n\\ 1\le j\le K}} \\ \left\{\sum_{l = 0}^\infty \frac{\omega_j^{K-n}(\omega_j\gamma_i)^l}{\Gamma(s+n-l)\Gamma(K-n+l+1)}\right\}_{\substack{1\le i\le n\\ 1\le j\le K}}
\end{array}
\right],\label{eqI1_d}
\end{align}
where the first $K-n$ summands in the infinite summations are cancelled as they are the linear combinations of $\{\omega_j^{i-1}\}_{1\le i\le K-n,1\le j\le K}$. Then, we apply Lemma~\ref{lemma1} to (\ref{eqI1_d}) to take the limit $\omega_{M+1},\ldots,\omega_K\rightarrow 0$. When $K>M\ge n$, we obtain
\begin{align*}
&\mc{I}_1(\mb{\Omega},\mb{V})\!=\!\prod_{j=K-n}^{K-1}\!\frac{\Gamma(s+K-j)\Gamma(j+1)}{\Gamma(s+M)\Gamma(K-M+1)}\frac{\prod_{j=1}^n\gamma_j^{n-M}}{\Delta_M(\bm{\omega})\Delta_n(\bm{\gamma})}\\
&\qquad\!\times\!\det\!\left[\!
\begin{array}{c}
\left\{\omega_j^{i-1}\right\}_{\substack{1\le i\le M-n\\ 1\le j\le M}} \\ \left\{{}_2F_1\left(
\begin{array}{c}
1,1-s-M\\K-M+1
\end{array}
\middle|-\omega_j\gamma_i
\right)\right\}_{\substack{1\le i\le n\\ 1\le j\le M}}
\end{array}
\right],
\end{align*}
and when $K>n>M$, we obtain
\begin{align*}
&\mc{I}_1(\mb{\Omega},\mb{V})\!=\!\frac{\prod_{j=n-M}^{n-1}\Gamma(s+n-j)\Gamma(j+K-n+1)}{\Delta_M(\bm{\omega})\Delta_n(\bm{\gamma})\Gamma(s+M)^M\Gamma(K-M+1)^M}\\
&\!\times\!\det\!\left[\!
\begin{array}{c}
\left\{\gamma_j^{i-1}\right\}_{\substack{1\le i\le n-M\\ 1\le j\le n}} \\ \left\{\gamma_j^{n-M}{}_2F_1\left(
\begin{array}{c}
1,1-s-M\\K-M+1
\end{array}
\middle|-\omega_i\gamma_j
\right)\right\}_{\substack{1\le i\le M\\ 1\le j\le n}}
\end{array}
\right],
\end{align*}
where we replace the infinite summation with its corresponding hypergeometric representation:
\begin{align*}
\sum_{l=0}^\infty & \frac{x^l}{\Gamma(a-l)\Gamma(b+l+1)}\nonumber\\
& = \frac{1}{\Gamma(a)\Gamma(b+1)}{}_2F_1\left(
\begin{array}{c}
1,1-a\\b+1
\end{array}
\middle| -x
\right).
\end{align*}
Note that the representation using the hypergeometric function can be analytically continued to arbitrary values of $x$. Again, we notice that the integral $\mc{I}_1(\mb{\Omega},\mb{V}) \equiv \mc{I}_1(\mb{\Omega})$ is independent of the matrix $\mb{V}$ and can be pulled out of the integral over $\mb{V}\in\mc{U}_K$ in (\ref{eqphi_e}).

When $K>M$, the integral $\mc{I}_2(\mb{\Omega})=\int_{\mc{U}_K}\exp\left(-\mathrm{tr}\mb{V}\mb{\Omega}\mb{V}^\dagger\mb{\Sigma}^{-1}\right)\mathrm{d}\mb{V}$ is obtained by applying Lemma~\ref{lemma1} to (\ref{eqHCIZ}) when taking the limit $\omega_{M+1},\ldots,\omega_K\rightarrow 0$. That is,
\begin{align*}
&\mc{I}_2(\mb{\Omega})\!= \frac{\prod_{j=K-M}^{K-1}\Gamma(j+1)\det[\mb{\Sigma}]^{K-1}}{\Delta_M(\bm{\omega})\Delta_K(\bm{\sigma})\prod_{j=1}^M\omega_i^{K-M}}\\
&\!\times\!\det\!\left[\!
\begin{array}{cc}
\left\{(-\sigma_i)^{1-j}\right\}_{\substack{1\le i\le K\\ 1\le j\le K-M}} & \left\{\exp\left(-\frac{\omega_j}{\sigma_i}\right)\right\}_{\substack{1\le i\le K\\ 1\le j\le M}}
\end{array}
\right].
\end{align*}

The denominator of (\ref{eqphi_e}) is solved by (\ref{eqSelberg}) by interchanging $K$ and $M$. Then, after inserting $\mc{I}_1(\mb{\Omega})$ and $\mc{I}_2(\mb{\Omega})$ into (\ref{eqphi_e}), we obtain $\phi(s)$ for $K>M\ge n$ as in (\ref{eqphi_f}) on top of the next page, where the second equality is obtained by applying Lemma~\ref{lemma2} and the identity~\cite[Eq. (7.811.1)]{gradshteyn2014}. When $K>n>M$, we obtain $\phi(s)$ as in (\ref{eqphi_g}), where we applied Lemma~\ref{lemma2} in the second equality.

\begin{figure*}[!t]
	\normalsize
	\setcounter{equation}{44}
	
\begin{align}
\phi(s) &= \frac{1}{M!} \left(\prod_{j=K-n}^{K-1}\frac{\Gamma(s+K-j)\Gamma(j+1)}{\Gamma(s+M)\Gamma(K-M+1)}\right)\frac{\det[\mb{\Sigma}]^{K-M-1}\prod_{j=1}^n\gamma_j^{n-M}}{\Delta_n(\bm{\gamma})\Delta_K(\bm{\sigma})\prod_{i=1}^{M-1}\Gamma(j+1)}\nonumber\\
&\qquad\times \int_{[0,\infty)^M}\det\left[
\begin{array}{c}
\left\{\omega_j^{i-1}\right\}_{\substack{1\le i\le M-n\\ 1\le j\le M}} \\ \left\{{}_2F_1\left(
\begin{array}{c}
1,1-s-M\\K-M+1
\end{array}
\middle|-\omega_j\gamma_i
\right)\right\}_{\substack{1\le i\le n\\ 1\le j\le M}}
\end{array}
\right]
\det\left[
\begin{array}{c}
\left\{(-\sigma_j)^{1-i}\right\}_{\substack{1\le i\le K-M\\ 1\le j\le K}} \\ \left\{\exp\left(-\frac{\omega_i}{\sigma_j}\right)\right\}_{\substack{1\le i\le M\\ 1\le j\le K}}
\end{array}
\right]\mathrm{d}\bm{\omega}\nonumber\\
&=\frac{\prod_{j=1}^n\gamma_j^{n-M}}{\Delta_n(\bm{\gamma})\Delta_K(\bm{\sigma})}\prod_{j=K-n}^{K-1}\frac{\Gamma(s+K-j)\Gamma(j+1)}{\Gamma(j-K+M+1)\Gamma(s+M)\Gamma(K-M+1)}\nonumber\\
&\qquad\times \det\left[
\begin{array}{cc}
\left\{\sigma_i^{j-1}\right\}_{\substack{1\le i\le K\\1\le j\le K-n}} & \left\{\sigma_i^{K-M}{}_3F_1\left(\begin{array}{c} 1,1,1-s-M\\ K-M+1 \end{array} \middle|-\sigma_i\gamma_j\right)\right\}_{\substack{1\le i\le K\\1\le j\le n}}
\end{array}
\right]\label{eqphi_f}
\end{align}
	
\hrulefill

\begin{align}
\phi(s) &= \frac{1}{M!} \left(\prod_{j=n-M}^{n-1}\frac{\Gamma(s+n-j)\Gamma(j+K-n+1)}{\Gamma(s+M)\Gamma(K-M+1)}\right)\frac{\det[\mb{\Sigma}]^{K-M-1}}{\Delta_n(\bm{\gamma})\Delta_K(\bm{\sigma})\prod_{j=0}^{M-1}\Gamma(j+1)}\nonumber\\
&\qquad\times \int_{[0,\infty)^M}\det\left[
\begin{array}{c}
\left\{\gamma_j^{i-1}\right\}_{\substack{1\le i\le n-M\\ 1\le j\le n}} \\ \left\{\gamma_j^{n-M}{}_2F_1\left(
\begin{array}{c}
1,1-s-M\\K-M+1
\end{array}
\middle|-\omega_i\gamma_j
\right)\right\}_{\substack{1\le i\le M\\ 1\le j\le n}}
\end{array}
\right]
\det\left[
\begin{array}{c}
\left\{(-\sigma_j)^{1-i}\right\}_{\substack{1\le i\le K-M\\ 1\le j\le K}} \\ \left\{\exp\left(-\frac{\omega_i}{\sigma_j}\right)\right\}_{\substack{1\le i\le M\\ 1\le j\le K}}
\end{array}
\right]\mathrm{d}\bm{\omega}\nonumber\\
&=\frac{(-1)^{(K-M)(n-M)}}{\Delta_n(\bm{\gamma})\Delta_K(\bm{\sigma})}\prod_{j=n-M}^{n-1}\frac{\Gamma(s+n-j)\Gamma(j+K-n+1)}{\Gamma(j+M-n+1)\Gamma(s+M)\Gamma(K-M+1)}\nonumber\\
&\qquad\times \det\left[
\begin{array}{cc}
\left\{0\right\}_{\substack{1\le i\le K-M\\1\le j\le n-M}} & \left\{\sigma_j^{i-1}\right\}_{\substack{1\le i\le K-M\\1\le j\le K}} \\
\left\{\gamma_i^{j-1}\right\}_{\substack{1\le i\le n\\1\le j\le n-M}} & \left\{\gamma_i^{n-M}\sigma_j^{K-M}{}_3F_1\left(\begin{array}{c} 1,1,1-s-M\\ K-M+1 \end{array} \middle|-\sigma_j\gamma_i\right)\right\}_{\substack{1\le i\le n\\1\le j\le K}}
\end{array}
\right]\label{eqphi_g}
\end{align}

	\setcounter{equation}{\value{MYtempeqncnt}}
	\hrulefill
	\vspace*{4pt}
\end{figure*}
\setcounter{equation}{46}


\ifCLASSOPTIONcaptionsoff
  \newpage
\fi




\begin{thebibliography}{}
\providecommand{\url}[1]{#1}
\csname url@samestyle\endcsname
\providecommand{\newblock}{\relax}
\providecommand{\bibinfo}[2]{#2}
\providecommand{\BIBentrySTDinterwordspacing}{\spaceskip=0pt\relax}
\providecommand{\BIBentryALTinterwordstretchfactor}{4}
\providecommand{\BIBentryALTinterwordspacing}{\spaceskip=\fontdimen2\font plus
\BIBentryALTinterwordstretchfactor\fontdimen3\font minus
  \fontdimen4\font\relax}
\providecommand{\BIBforeignlanguage}[2]{{%
\expandafter\ifx\csname l@#1\endcsname\relax
\typeout{** WARNING: IEEEtran.bst: No hyphenation pattern has been}%
\typeout{** loaded for the language `#1'. Using the pattern for}%
\typeout{** the default language instead.}%
\else
\language=\csname l@#1\endcsname
\fi
#2}}
\providecommand{\BIBdecl}{\relax}
\BIBdecl

\end{thebibliography}


\begin{thebibliography}{99}
\bibitem{liang2009}
Y. Liang, H. V. Poor, and S. Shamai, \emph{Information Theoretic Security}. Hanover, MA: Now Publishers, 2009.

\bibitem{wyner1975}
A. D. Wyner, \lq\lq The wire-tap channel,\rq\rq\ \emph{Bell Syst. Tech. J.}, vol. 54, no. 8, pp. 1355-1387, Oct. 1975.

\bibitem{csiszar1978}
I. Csiszar and J. Korner, \lq\lq Broadcast channels with confidential messages,\rq\rq\ {\it IEEE Trans. Inf. Theory}, vol. 24, no. 3, pp. 339-348, May 1978.

\bibitem{hero2003}
A. O. Hero, \lq\lq Secure space-time communication,\rq\rq\ {\it IEEE Trans. Inf. Theory}, vol. 49, no. 12, pp. 3235-3249, Dec. 2003.

\bibitem{parada2005}
P. Parada and R. Blahut, \lq\lq Secrecy capacity of SIMO and slow fading channels,\rq\rq\ in {\it Proc. IEEE Int. Symp. Inf. Theory}, Hong Kong, 2005, pp. 2152-2155.

\bibitem{zhu2016}
J. Zhu, Y. Zou, G. Wang, Y. D. Yao, and G. K. Karagiannidis, \lq\lq On secrecy performance of antenna-selection-aided MIMO systems against eavesdropping,\rq\rq\ {\it IEEE Trans. Veh. Technol.}, vol. 65, no. 1, pp. 214-225, Jan. 2016.

\bibitem{li2007}
Z. Li, W. Trappe, and R. Yates, \lq\lq Secret communication via multi-antenna transmission,\rq\rq\ in {\it Proc. 41st Conf. Inf. Sciences Systems}, Baltimore, MD, 2007, pp. 905-910.

\bibitem{shafiee2007}
S. Shafiee and S. Ulukus, \lq\lq Achievable rates in Gaussian MISO channels with secrecy constraints,\rq\rq\ in {\it Proc. IEEE Int. Symp. Inf. Theory}, Nice, France, 2007, pp. 2466-2470.

\bibitem{shafiee2009}
S. Shafiee, N. Liu, and S. Ulukus, \lq\lq Towards the secrecy capacity of the Gaussian MIMO wire-tap channel: The 2-2-1 channel,\rq\rq\ {\it IEEE Trans. Inf. Theory}, vol. 55, no. 9, pp. 4033-4039, Sept. 2009.

\bibitem{jin2013}
H. Jin and J. Kim, \lq\lq Combined antenna selection and beamforming for secure transmission in distributed antenna systems,\rq\rq\ in {\it Proc. Int. Conf. ICT Convergence}, Jeju, South Korea, 2013, pp. 1043-1047.

\bibitem{zhang2013}
M. Zhang, R. Xue, H. Yu, H. Luo, and W. Chen, \lq\lq Secrecy capacity optimization in coordinated multi-point processing,\rq\rq\ in {\it Proc. 2013 IEEE Int. Conf. Commun.}, Budapest, Hungary, 2013, pp. 5845-5849.

\bibitem{ng2015}
D. W. K. Ng and R. Schober, \lq\lq Secure and green SWIPT in distributed antenna networks with limited backhaul capacity,\rq\rq\ {\it IEEE Trans. Wirel. Commun.}, vol. 14, no. 9, pp. 5082-5097, Sept. 2015.

\bibitem{akino2011}
T. Koike-Akino, A. F. Molisch, C. Duan, Z. Tao, and P. Orlik, \lq\lq Capacity, MSE and secrecy analysis of linear block precoding for distributed antenna systems in multi-user frequency-selective fading channels,\rq\rq\ {\it IEEE Trans. Commun.}, vol. 59, no. 3, pp. 888-900, Mar. 2011.

\bibitem{wang2016}
H. M. Wang, C. Wang, D. W. K. Ng, M. H. Lee, and J. Xiao, \lq\lq Artificial noise assisted secure transmission for distributed antenna systems,\rq\rq\ {\it IEEE Trans. Signal Process.}, vol. 64, no. 15, pp. 4050-4064, Aug. 2016.

\bibitem{khisti2010}
A. Khisti and G. W. Wornell, \lq\lq Secure transmission with multiple antennas I: The MISOME wiretap channel,\rq\rq\ {\it IEEE Trans. Inf. Theory}, vol. 56, no. 7, pp. 3088-3104, July 2010.

\bibitem{oggier2011}
F. Oggier and B. Hassibi, \lq\lq The secrecy capacity of the MIMO wiretap channel,\rq\rq\ {\it IEEE Trans. Inf. Theory}, vol. 57, no. 8, pp. 4961-4972, Aug. 2011.

\bibitem{khisti2010b}
A. Khisti and G. W. Wornell, \lq\lq Secure transmission with multiple antennas part II: The MIMOME wiretap channel,\rq\rq\ {\it IEEE Trans. Inf. Theory}, vol. 56, no. 11, pp. 5515-5532, Nov. 2010.

\bibitem{bustin2009}
R. Bustin, R. Liu, H. V. Poor, and S. Shamai, \lq\lq An MMSE approach to the secrecy capacity of the MIMO Gaussian wiretap channel,\rq\rq\ {\it EURASIP J. Wirel. Commun. Netw.}, vol. 2009, no. 1, p. 370970, July 2009.

\bibitem{fakoorian2013}
S. A. A. Fakoorian and A. L. Swindlehurst, \lq\lq Full rank solutions for the MIMO Gaussian wiretap channel with an average power constraint,\rq\rq\ {\it IEEE Trans. Signal Process.}, vol. 61, no. 10, pp. 2620-2631, May 2013.

\bibitem{loyka2016}
S. Loyka and C. D. Charalambous, \lq\lq Optimal signaling for secure communications over Gaussian MIMO wiretap channels,\rq\rq\ {\it IEEE Trans. Inf. Theory}, vol. 62, no. 12, pp. 7207-7215, Dec. 2016.

\bibitem{wang2015}
H. Wang and X. Xia, \lq\lq Enhancing wireless secrecy via cooperation: signal design and optimization,\rq\rq\ {\it IEEE Commun. Mag.}, vol. 53, no. 12, pp. 47-53, Dec. 2015.

\bibitem{li2011}
J. Li and A. P. Petropulu, \lq\lq On ergodic secrecy rate for Gaussian MISO wiretap channels,\rq\rq\ {\it IEEE Trans. Wirel. Commun.}, vol. 10, no. 4, pp. 1176-1187, Apr. 2011.

\bibitem{lin2013}
P. H. Lin, S. H. Lai, S. C. Lin, and H. J. Su, \lq\lq On secrecy rate of the generalized artificial-noise assisted secure beamforming for wiretap channels,\rq\rq\ {\it IEEE J. Sel. Areas Commun.}, vol. 31, no. 9, pp. 1728-1740, Sept. 2013.

\bibitem{li2011b}
J. Li and A. P. Petropulu, \lq\lq Ergodic secrecy rate for multiple-antenna wiretap channels with Rician fading,\rq\rq\ {\it IEEE Trans. Inf. Forensics Secur.}, vol. 6, no. 3, pp. 861-867, Sept. 2011.

\bibitem{nguyen2011}
T. V. Nguyen and H. Shin, \lq\lq Power allocation and achievable secrecy rates in MISOME wiretap channels,\rq\rq\ {\it IEEE Commun. Lett.}, vol. 15, no. 11, pp. 1196-1198, Nov. 2011.

\bibitem{lin2014}
S. C. Lin and C. L. Lin, \lq\lq On secrecy capacity of fast fading MIMOME wiretap channels with statistical CSIT,\rq\rq\ {\it IEEE Trans. Wirel. Commun.}, vol. 13, no. 6, pp. 3293-3306, June 2014.

\bibitem{wang2015b}
H. M. Wang, T. Zheng, and X. G. Xia, \lq\lq Secure MISO wiretap channels with multiantenna passive eavesdropper: Artificial noise vs. artificial fast fading,\rq\rq\ {\it IEEE Trans. Wirel. Commun.}, vol. 14, no. 1, pp. 94-106, Jan. 2015.

\bibitem{vishwakarma2014}
S. Vishwakarma and A. Chockalingam, \lq\lq Power allocation in MIMO wiretap channel with statistical CSI and finite-alphabet input,\rq\rq\ in {\it Proc. 2014 Int. Conf. Signal Process. Commun.}, Bangalore, India, 2014, pp. 1-6.

\bibitem{wu2012}
Y. Wu, C. Xiao, Z. Ding, X. Gao, and S. Jin, \lq\lq Linear precoding for finite-alphabet signaling over MIMOME wiretap channels,\rq\rq\ {\it IEEE Trans. Veh. Technol.}, vol. 61, no. 6, pp. 2599-2612, July 2012.

\bibitem{zappone2016}
A. Zappone, P. H. Lin, and E. Jorswieck, \lq\lq Energy efficiency of confidential multi-antenna systems with artificial noise and statistical CSI,\rq\rq\ {\it IEEE J. Sel. Top. Signal Process.}, vol. 10, no. 8, pp. 1462-1477, Dec. 2016.

\bibitem{zhu2013}
Y. Zhu, Y. Zhou, S. Patel, X. Chen, L. Pang, and Z. Xue, \lq\lq Artificial noise generated in MIMO scenario: Optimal power design,\rq\rq\ {\it IEEE Signal Process. Lett.}, vol. 20, no. 10, pp. 964-967, Oct. 2013.

\bibitem{goel2008}
S. Goel and R. Negi, \lq\lq Guaranteeing secrecy using artificial noise,\rq\rq\ {\it IEEE Trans. Wirel. Commun.}, vol. 7, no. 6, pp. 2180-2189, June 2008.

\bibitem{zhou2010}
X. Zhou and M. R. McKay, \lq\lq Secure transmission with artificial noise over fading channels: Achievable rate and optimal power allocation,\rq\rq\ {\it IEEE Trans. Veh. Technol.}, vol. 59, no. 8, pp. 3831-3842, Oct. 2010.

\bibitem{he2014}
X. He and A. Yener, \lq\lq MIMO wiretap channels with unknown and varying eavesdropper channel states,\rq\rq\ {\it IEEE Trans. Inf. Theory}, vol. 60, no. 11, pp. 6844-6869, Nov. 2014.

\bibitem{wang2015c}
C. Wang, H. Wang, and X. Xia, \lq\lq Hybrid opportunistic relaying and jamming with power allocation for secure cooperative networks,\rq\rq\ {\it IEEE Trans. Wirel. Commun.}, vol. 14, no. 2, pp. 589-605, Feb. 2015.

\bibitem{xu2017}
Y. Xu, H. M. Wang, Q. Yang, K. W. Huang, and T. X. Zheng, \lq\lq Cooperative transmission for physical layer security by exploring social awareness,\rq\rq\ in {\it 2017 IEEE Globecom Workshops}, 2017, pp. 1-6.

\bibitem{wang2018}
H. M. Wang, Y. Xu, K. W. Huang, Z. Han, and T. A. Tsiftsis, \lq\lq Cooperative secure transmission by exploiting social ties in random networks,\rq\rq\ {\it IEEE Trans. Commun.}, pp. 1-1, 2018.

\bibitem{zheng2017}
Z. Zheng and Z. J. Haas, \lq\lq On the performance of reconfigurable distributed MIMO in mobile networks,\rq\rq\ {\it IEEE Trans. Commun.}, vol. 65, no. 4, pp. 1609-1622, Apr. 2017.

\bibitem{marks1978}
B. R. Marks and G. P. Wright, \lq\lq Technical note-A general inner approximation algorithm for nonconvex mathematical programs,\rq\rq\ {\it Oper. Res.}, vol. 26, no. 4, pp. 681-683, Aug. 1978.

\bibitem{yan2015}
S. Yan, N. Yang, G. Geraci, R. Malaney, and J. Yuan, \lq\lq Optimization of code rates in SISOME wiretap channels,\rq\rq\ {\it IEEE Trans. Wirel. Commun.}, vol. 14, no. 11, pp. 6377-6388, Nov. 2015.

\bibitem{bloch2013}
M. R. Bloch and J. N. Laneman, \lq\lq Strong secrecy from channel resolvability,\rq\rq\ {\it IEEE Trans. Inf. Theory}, vol. 59, no. 12, pp. 8077-8098, Dec. 2013.

\bibitem{telatar1999}
E. Telatar, \lq\lq Capacity of multi-antenna Gaussian channels,\rq\rq\ {\it Eur. Trans. Telecommun.}, vol. 10, no. 6, pp. 585-595, 1999.

\bibitem{simon2006}
S. H. Simon, A. L. Moustakas, and L. Marinelli, \lq\lq Capacity and character expansions: Moment-generating function and other exact results for MIMO correlated channels,\rq\rq\ {\it IEEE Trans. Inf. Theory}, vol. 52, no. 12, pp. 5336-5351, Dec. 2006.

\bibitem{sterngerg1995}
S. Sternberg, {\it Group Theory and Physics}. Cambridge, United Kingdom: Cambridge University Press, 1995.

\bibitem{gradshteyn2014}
I. S. Gradshteyn and I. M. Ryzhik, {\it Table of integrals, series, and products}. Academic Press, 2014.

\bibitem{grant2013}
M. Grant and S. Boyd, \lq\lq CVX: Matlab software for disciplined convex programming, version 2.0 beta,\rq\rq\ Sept-2013. [Online]. Available: http://cvxr.com/cvx/.

\bibitem{grant2008}
M. C. Grant and S. P. Boyd, \lq\lq Graph Implementations for Nonsmooth Convex Programs,\rq\rq\ in {\it Recent Advances in Learning and Control (a tribute to M. Vidyasagar)}, V. Blondel, S. Boyd, and H. Kimura, Eds. Springer, London, 2008, pp. 95-110.

\bibitem{magnus1999}
J. R. Magnus and H. Neudecker, {\it Matrix Differential Calculus with Applications in Statistics and Econometrics}. Wiley, 1999.

\bibitem{olver2010}
F. W. J. Olver, D. W. Lozier, R. F. Boisvert, and C. W. Clark, Eds., {\it NIST Handbook of Mathematical Functions Paperback and CD-ROM, 1 edition}. New York: Cambridge University Press, 2010.

\bibitem{cumanan2014}
K. Cumanan, Z. Ding, B. Sharif, G. Y. Tian, and K. K. Leung, \lq\lq Secrecy rate optimizations for a MIMO secrecy channel with a multiple-antenna eavesdropper,\rq\rq\ {\it IEEE Trans. Veh. Technol.}, vol. 63, no. 4, pp. 1678-1690, May 2014.

\bibitem{kieburg2010}
M. Kieburg and T. Guhr, \lq\lq Derivation of determinantal structures for random matrix ensembles in a new way,\rq\rq\ {\it J. Phys. Math. Theor.}, vol. 43, no. 7, p. 075201, 2010.

\bibitem{mathai1997}
A. M. Mathai, {\it Jacobians of Matrix Transformations and Functions of Matrix Arguments}. New Jersey: World Scientific Publishing Co Inc, 1997.

\bibitem{gross1989}
K. I. Gross and D. S. P. Richards, \lq\lq Total positivity, spherical series, and hypergeometric functions of matrix argument,\rq\rq\ {\it J. Approx. Theory}, vol. 59, no. 2, pp. 224-246, Nov. 1989.

\bibitem{itzykson1980}
C. Itzykson and J. B. Zuber, \lq\lq The planar approximation. II,\rq\rq\ {\it J. Math. Phys.}, vol. 21, no. 3, pp. 411-421, 1980.

\bibitem{mehta2004}
M. L. Mehta, {\it Random Matrices}. San Diego, CA: Academic Press, 2004.


\end{thebibliography}
%
%


%

\begin{IEEEbiography}[{\includegraphics[width=1in,height=1.25in,clip,keepaspectratio]{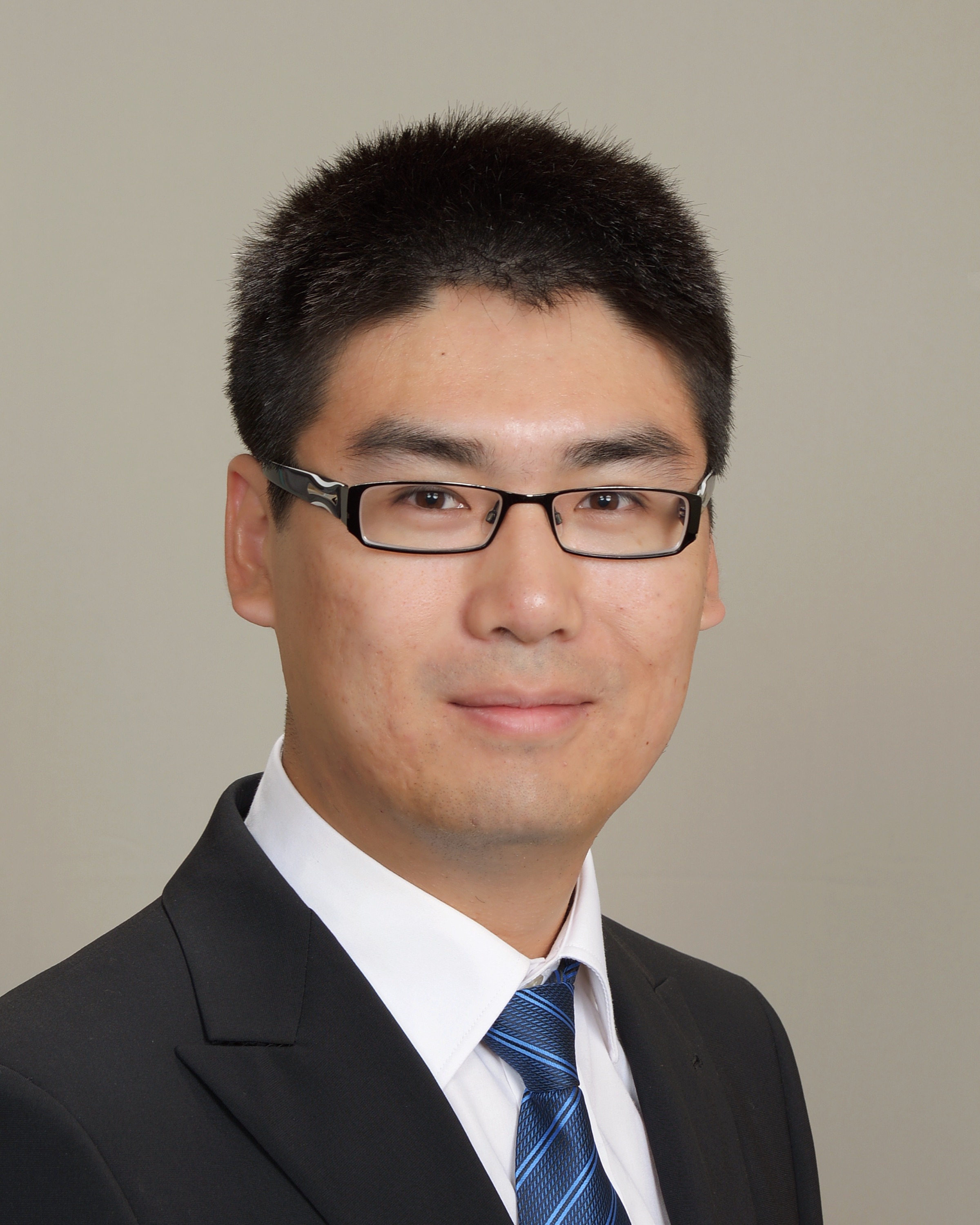}}]{Zhong Zheng (S'10--M'15)} received the B.Eng. degree from Beijing University of Technology, China in 2007, M.Sc. degree from Helsinki University of Technology, Finland in 2010, and D.Sc. degree from Aalto University, Finland in 2015. From 2015 to 2018, he held visiting positions at University of Texas at Dallas and National Institute of Standards and Technology. His research interests include massive MIMO, secure communications, millimeter wave communications, random matrix theory, and free probability theory.
\end{IEEEbiography}

\begin{IEEEbiography}[{\includegraphics[width=1in,height=1.25in,clip,keepaspectratio]{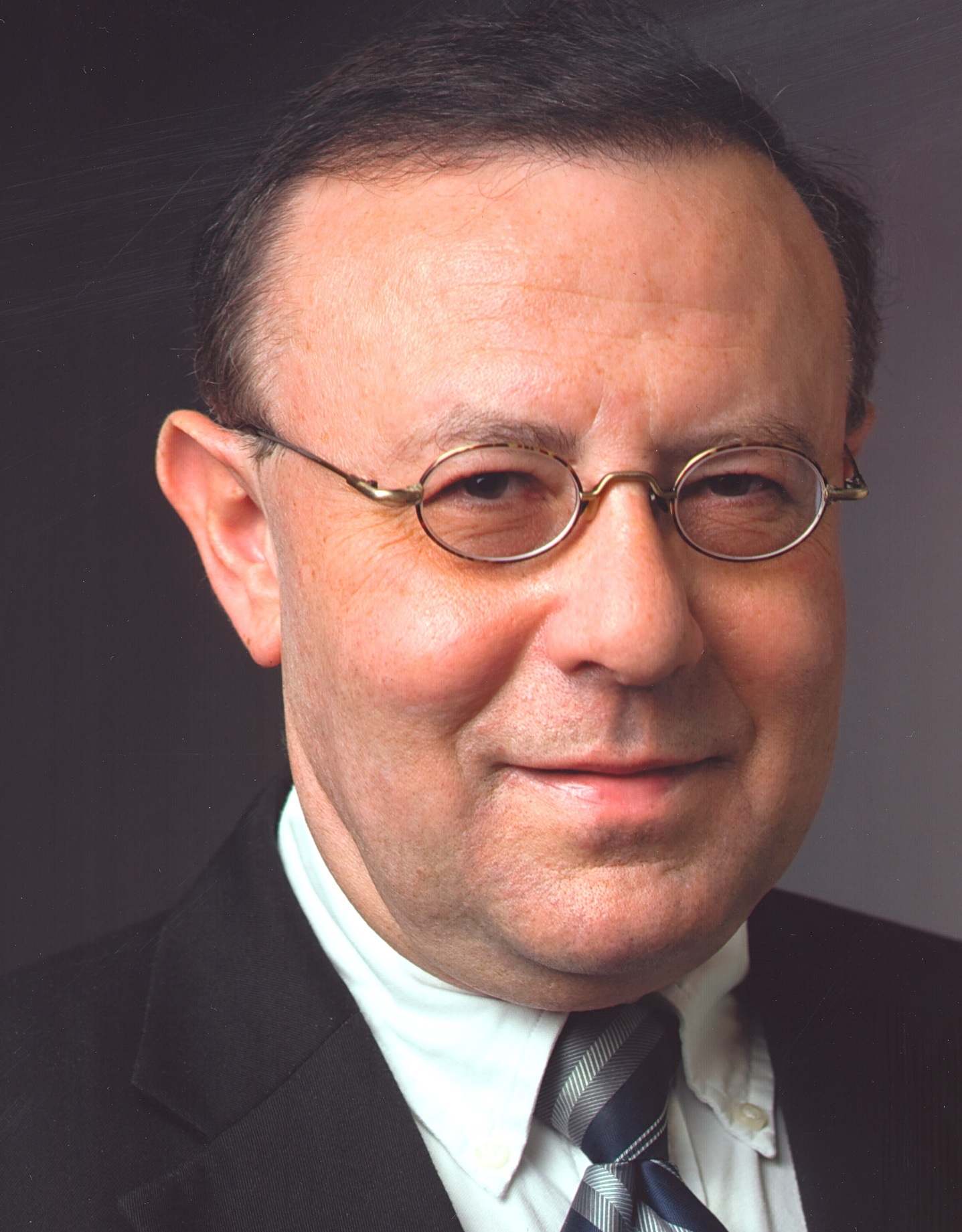}}]{Zygmunt J. Haas (S'84--M'88--SM'90--F'07)}
received the Ph.D. degree in electrical engineering from Stanford University, Stanford, CA, USA, in 1988. In 1988, he joined AT\&T Bell Laboratories in the Network Research Area, where he pursued research in wireless communications, mobility management, fast protocols, optical networks, and optical switching. In August 1995, he joined the Faculty of the School of Electrical and Computer Engineering, Cornell University, Ithaca, NY, USA, where he is a Professor. He heads the Wireless Network Laboratory, a research group with extensive contributions and international recognition in the area of ad hoc networks and sensor networks. He has authored over 200 technical conference and journal papers and holds 18 patents in the areas of wireless networks and wireless communications, optical switching and optical networks, and high-speed networking protocols. His research interests include protocols for mobile and wireless communication and networks, secure communications, and modeling and performance evaluation of large and complex systems. He has organized several workshops, delivered numerous tutorials at major IEEE and ACM conferences, and served as an Editor for several journals and magazines, including the IEEE/ACM TRANSACTIONS ON NETWORKING, the IEEE TRANSACTIONS ON WIRELESS COMMUNICATIONS, the IEEE Communications Magazine, and Wireless Networks (Springer). He has been a Guest Editor of the IEEE JOURNAL ON SELECTED AREAS IN COMMUNICATIONS issues and served as the Chair of the IEEE Technical Committee on Personal Communications. He was the recipient of a number of awards and distinctions, including best paper awards, the 2012 IEEE ComSoc WTC Recognition Award for \lq\lq outstanding achievements and contribution in the area of wireless communications systems and networks,\rq\rq\ and the 2016 IEEE ComSoc AHSN Recognition Award for \lq\lq outstanding contributions to securing ad hoc and sensor networks.\rq\rq
\end{IEEEbiography}

\begin{IEEEbiography}[{\includegraphics[width=1in,height=1.25in,clip,keepaspectratio]{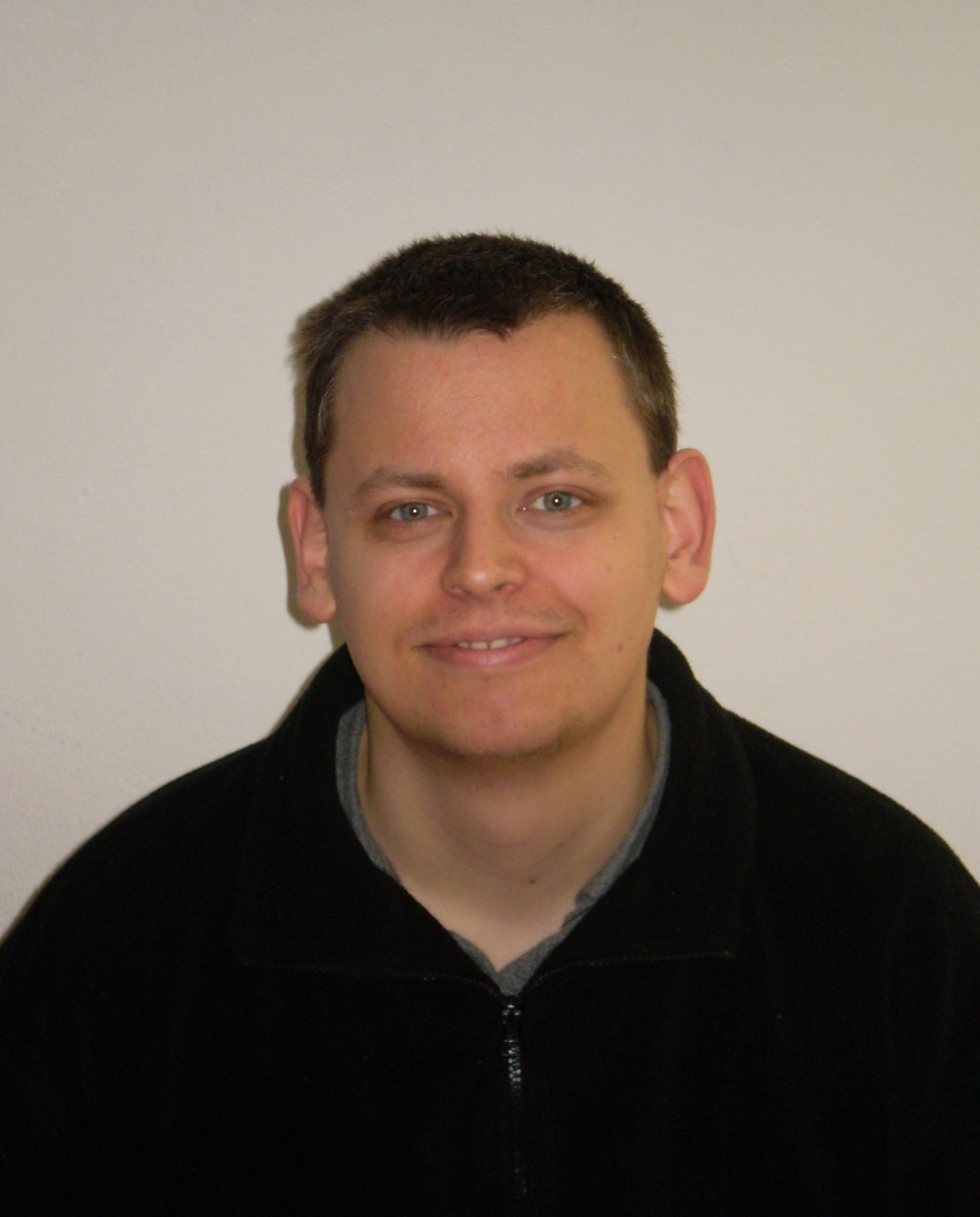}}]{Mario Kieburg}
received his Ph.D. in physics at the University of Duisburg-Essen, Germany, in 2010. After a two-year scholarship (2011-2013) at the State University of New York at Stony Brook, USA, he went to Bielefeld University, Germany, working and teaching as Senior Post-Doctoral Researcher in mathematical physics. In 2015, he habilitated in theoretical physics. His Research interest are the theoretical development and application of random matrix theory, in particular in the topics of harmonic analysis, group and representation theory, supersymmetry, orthogonal polynomial theory, quantum field theory, chaotic and disordered systems, time series analysis, and telecommunications.
\end{IEEEbiography}




\end{document}